\newtheorem{theorem}{Theorem}
\newtheorem{lemma}{Lemma}
\newtheorem{definition}{Definition}
\newtheorem{observation}{Observation}
\newtheorem{problem}{Problem}
\title{Subquadratic Encodings for Point Configurations}
\author[1]{Jean Cardinal\thanks{%
\texttt{jcardin@ulb.ac.be}.
Supported by the ``Action de Recherche Concert\'ee'' (ARC)
COPHYMA, convention number 4.110.H.000023.}}
\author[2]{Timothy M. Chan\thanks{\texttt{tmc@illinois.edu}.}}
\author[1]{John Iacono\thanks{%
\texttt{arxiv@johniacono.com}.
Supported by NSF grants CCF-1319648, CCF-1533564, CCF-0430849
and MRI-1229185, a Fulbright Fellowship and by the Fonds de la Recherche
Scientifique-FNRS under Grant n\textdegree{} MISU F 6001 1 and two Missions
Scientifiques.}}
\author[1]{\\Stefan Langerman\thanks{%
\texttt{slanger@ulb.ac.be}.
Directeur de recherches du Fonds de la Recherche Scientifique-FNRS.}}
\author[1]{Aurélien Ooms\thanks{%
\texttt{aureooms@ulb.ac.be}.
Supported by the Fund for Research Training in Industry and Agriculture (FRIA).}}
\affil[1]{Département d'Informatique, Université libre de Bruxelles (ULB), Belgium}
\affil[2]{Department of Computer Science, University of Illinois at Urbana-Champaign, USA}
\date{}
\begin{document}
\maketitle

\begin{abstract}
	For most algorithms dealing with sets of points in the plane, the only
	relevant information carried by the input is the combinatorial
	configuration of the points: the orientation of each triple of points in
	the set (clockwise, counterclockwise, or collinear). This information is
	called the \emph{order type} of the point set.
	In the dual, realizable order types and abstract order types are
	combinatorial analogues of line arrangements and pseudoline arrangements.
	Too often in the literature we analyze algorithms in the
	real-RAM model for simplicity, putting aside the fact that computers as we
	know them cannot handle arbitrary real numbers without some sort of
	encoding.
	Encoding an order type by the integer coordinates of some realizing point
	set is known to yield doubly exponential coordinates in some cases. Other
	known encodings can achieve quadratic space or fast orientation queries,
	but not both.
	In this contribution, we give a compact encoding for abstract order types
	that allows efficient query of the orientation of any triple: the encoding
	uses \( O(n^2) \) bits and an orientation query takes \(O(\log n)\) time in
	the word-RAM model.
	This encoding is space-optimal for abstract order types. We show how to
	shorten the encoding to \(O(n^2 {(\log\log n)}^2 / \log n)\) bits for
	realizable order types, giving the first subquadratic encoding for those
	order types with fast orientation queries.
	We further refine our encoding to attain \(O(\log n/\log\log n)\)
	query time without blowing up the space requirement.
	In the realizable case, we show that all those encodings can be computed
	efficiently.
	Finally, we generalize our results to the encoding of point configurations
	in higher dimension.
\end{abstract}
\section{Introduction}

At SoCG'86, Chazelle asked~\cite{GP93}:
\begin{quotation}
``How many bits does it take to know an order type?''
\end{quotation}

This question is of importance in Computational Geometry for the following two
reasons:
First, in many algorithms dealing with sets of points in the plane,
the only relevant information carried by the input is the combinatorial
configuration of the points given by the orientation of each triple of points in the
set (clockwise, counterclockwise, or collinear)~\cite{Ed12}.
Second, computers as we know them can only handle numbers with
finite description and we cannot assume that they can handle arbitrary
real numbers without some sort of encoding. The study of \emph{robust}
algorithms is focused on ensuring the correct solution of problems on finite
precision machines. Chapter 41 of The Handbook of Discrete and Computational
Geometry is dedicated to this issue~\cite{Ya04}.

The (counterclockwise) orientation \(\nabla(p,q,r)\) of a triple of points
\(p\), \(q\), and \(r\) with coordinates \((x_p, y_p)\), \((x_q, y_q)\), and
\((x_r, y_r)\) is the sign of the determinant
\begin{displaymath}
\begin{vmatrix}
	1 & x_p & y_p \\
	1 & x_q & y_q \\
	1 & x_r & y_r
\end{vmatrix}.
\end{displaymath}

Given a set of \(n\) labeled points \(P = \{\, p_1, p_2, \ldots, p_n\,\}\), we
define the \emph{order type} of \(P\) to be the function \(\chi \colon\,
{[n]}^3 \to \{\, -, 0, +\,\} \colon\, (a,b,c) \mapsto \nabla(p_a, p_b, p_c)\)
that maps each triple of point labels to the orientation of the corresponding
points, up to isomorphism.
A great deal of the literature in computational geometry deals with this
notion~\cite{%
Le26,
Ri56,
FL78,
Go80,
GP83,
GP84,
GP86,
Al86,
GPS89,
Ri89,
BRS92,
GP91,
GP93,
BLSWZ93,
Fe96,
NV98,
AK01,
BMS01,
AAK02a,
AAK02b,
RZ04,
AK05,
FV11,
HMMS11,
MMIB12,
AMP13,
AILOW14,
AKPV14,
AKMPW15,
ACKLV16%
}.
The order type of a point set has been further abstracted into combinatorial
objects known as (rank-three) \emph{oriented matroids}~\cite{FL78}. The
\emph{chirotope axioms} define consistent systems of signs of
triples~\cite{BLSWZ93}.
From the topological representation theorem~\cite{BMS01}, all such
\emph{abstract} order types correspond to pseudoline arrangements, while, from
the standard projective duality, order types of point sets correspond to
straight line arrangements. See Chapter 6 of The Handbook for more
details~\cite{RZ04}.

When the order type of a pseudoline arrangement can be realized by an
arrangement of straight lines, we call the pseudoline arrangement
\emph{stretchable}.
As an example of a nonstretchable arrangement, Levi gives Pappus's
configuration where eight triples of concurrent straight lines force a ninth,
whereas the ninth triple cannot be enforced by pseudolines~\cite{Le26} (see
Figure~\ref{fig:pappus}).
Ringel shows how to convert the so-called ``non-Pappus'' arrangement of
Figure~\ref{fig:pappus}~(b) to a simple arrangement while preserving
nonstretchability~\cite{Ri56}.
All arrangements of eight or fewer pseudolines are stretchable~\cite{GP80}, and
the only nonstretchable simple arrangement of nine pseudolines is the one given
by Ringel~\cite{Ri89}.
More information on pseudoline arrangements is available in Chapter 5 of The
Handbook~\cite{Go04}.

\begin{figure}
	\centering{}
\begin{subfigure}[t]{0.5\textwidth}
		\centering{}
		\includegraphics[scale=.9]{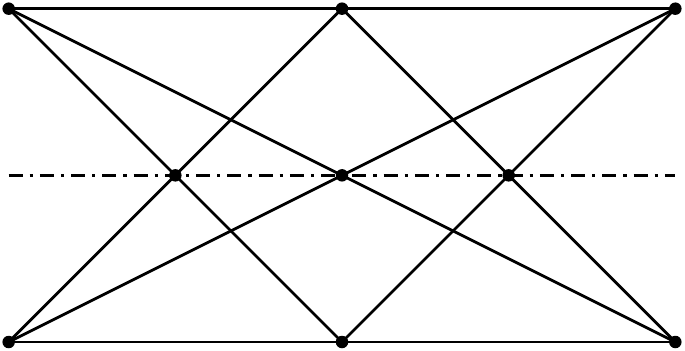}
		\caption{Realizable order type.}
\end{subfigure}%
\begin{subfigure}[t]{0.5\textwidth}
		\centering{}
		\includegraphics[scale=.9]{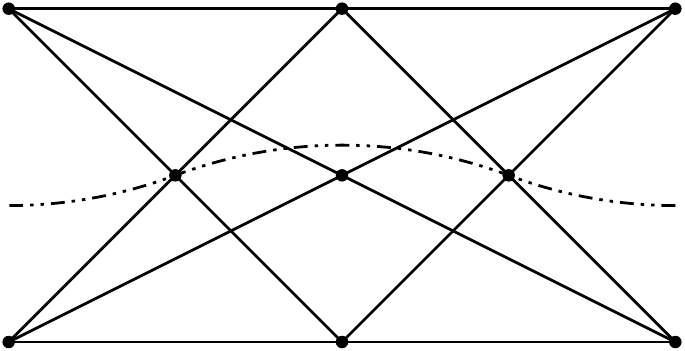}
		\caption{Abstract order type which is not realizable.}
\end{subfigure}
	\caption{Pappus's configuration.}\label{fig:pappus}
\end{figure}

Figure~\ref{fig:pappus} shows that not all pseudoline arrangements are
stretchable. Indeed, most are not: there are \(2^{\Theta(n^2)}\)
abstract order types~\cite{Fe96} and only \(2^{\Theta(n \log n)}\) realizable
order types~\cite{GP86,Al86}.
This discrepancy stems from the algebraic nature of realizable order types, as
illustrated by the main tool used in the upper bound proofs (the Milnor-Thom
Theorem~\cite{Mi64,Th65}).

Information theory implies that we need quadratic space for abstract order
types whereas we only need linearithmic space for realizable order types.
Hence, storing all \( \binom{n}{3} \) orientations in a lookup table seems
wasteful.
Another obvious idea for storing the order type of a point set is to store
the coordinates of the points, and answer orientation queries
by computing the corresponding determinant. While this should work in many practical
settings, it cannot work for all point sets. Perles's configuration shows that
some configuration of points, containing collinear triples, forces at least one
coordinate to be irrational~\cite{Gr05} (see Figure~\ref{fig:perles}).
Order types of points in general position can always be represented by rational
coordinates. It is well known, however, that some configurations require doubly
exponential coordinates, hence coordinates with exponential bitsizes if
represented in the normal way~\cite{GPS89}.

\begin{figure}
	\centering{}
	\includegraphics[scale=.9]{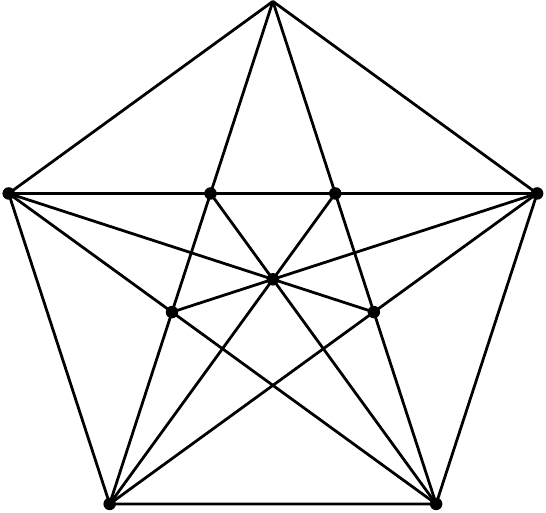}
	\caption{Perles's configuration.}\label{fig:perles}
\end{figure}

Goodman and Pollack defined \(\lambda\)-matrices which can encode abstract order
types using \( O(n^2 \log{n}) \) bits~\cite{GP83}. They asked if the space
requirements could be moved closer to the information-theoretic lower bounds.
Felsner and Valtr showed how to encode abstract order types optimally in
\(O(n^2)\) bits via the wiring diagram of their corresponding allowable
sequence~\cite{Fe96, FV11} (as defined in~\cite{Go80}). Aloupis et al.\ gave
an encoding of size \(O(n^2)\) that can be computed in \(O(n^2)\) time and that
can be used to test for the isomorphism of two distinct point sets in the same
amount of time. However, it is not known how to decode the orientation of one
triple from any of those encodings in, say, sublinear time. Moreover, since the
information-theoretic lower bound for realizable order types is only \(\Omega(n
\log{n})\), we must ask if this space bound is approachable for those order
types while keeping orientation queries reasonably efficient.
\subsection*{Our Results}

In this contribution, we are interested in \emph{compact} encodings for
order types: we wish to design data structures using as few bits as possible
that can be used to quickly answer orientation queries of a given abstract or
realizable order type.
In Section~\ref{sec:abstract}, we give the first optimal encoding for abstract
order types that allows efficient query of the orientation of any triple: the
encoding is a data structure that uses \( O(n^2) \) bits of space with queries
taking \(O(\log n)\) time in the word-RAM model.
Our encoding is far from being space-optimal for realizable order types.
Fortunately, its construction can be easily tuned to only require \(O(n^2
{(\log{\log{n}})}^2 / \log{n})\) bits in this case.
In Section~\ref{sec:query}, we further refine our encoding to
reduce the query time to \(O(\log{n}/\log{\log{n}})\).
In the realizable case, we give quadratic upper bounds on the
preprocessing time required to compute an encoding in the real-RAM model.
In Section~\ref{sec:hyperplanes} we generalize our encodings for chirotopes of
point sets in higher dimension.

Our data structure is the first subquadratic encoding for realizable order
types that allows efficient query of the orientation of any triple. It is not
known whether a subquadratic constant-degree algebraic decision tree exists for
the related problem of deciding whether a point set contains a collinear
triple. Any such decision tree would yield another subquadratic encoding for
realizable order types. We see the design of compact encodings for realizable
order types as a subgoal towards subquadratic nonuniform algorithms for this
related problem, a major open problem in Computational Geometry. Note that
pushing the preprocessing time below quadratic would yield such an algorithm.
\section{Encoding Order Types via Hierarchical Cuttings}\label{sec:abstract}

To make our statements clear, we will use the following definition:
\begin{definition}\label{def:encoding}
For fixed \(k\) and given a function \(f : {[n]}^k \to [O(1)]\), we define
a \((S(n),Q(n))\)-encoding of \(f\) to be a string of \(S(n)\) bits such
that, given this string and any \(t \in {[n]}^k\), we can compute \(f(t)\)
in \(Q(n)\) query time in the word-RAM model.
\end{definition}
In this section, we use this definition with \(f\) being some order
type,\footnote{%
Technically, we encode the orientation predicate of some realizing
arrangement of the order type and skip the isomorphism. If desired, a
canonical labeling of the arrangement can be produced in \(O(n^2)\) time for
abstract and realizable order types~\cite{AILOW14}.
}
\(k=3\) and the codomain of \(f\) being \(\{\, -,0,+\,\}\). For the rest of the
discussion, we assume the word-RAM model with word-size \(w \geq \log n\) and
the standard arithmetic and bitwise operators.
We prove our main theorems for the two-dimensional case:
\begin{theorem}\label{thm:abstract}
All abstract order types have a \((O(n^2), O(\log n))\)-encoding.
\end{theorem}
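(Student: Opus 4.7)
My plan is to dualize the orientation predicate, reducing each query to a point-location problem in an arrangement, and then handle that with a hierarchical cutting. By the standard projective duality (and, for abstract order types, its pseudoline analogue provided by the topological representation theorem cited in the introduction), each point $p_i$ becomes a pseudoline $\ell_i$, and the orientation $\chi(a,b,c)$ is determined by which side of $\ell_c$ contains the intersection $v_{ab} := \ell_a \cap \ell_b$. It therefore suffices to preprocess the arrangement of $\ell_1,\dots,\ell_n$ into $O(n^2)$ bits so that, given a triple $(a,b,c)$, the cell containing $v_{ab}$ can be located in $O(\log n)$ time, which is enough to decide its position relative to $\ell_c$.

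I would then apply a hierarchical $(1/r)$-cutting of the arrangement with $r$ a small constant (such cuttings exist for pseudoline arrangements too): a tree $T$ of bounded branching factor and depth $L = O(\log n)$ whose nodes are cells partitioning the plane, where a node $\Delta$ at level $k$ carries a conflict list $C(\Delta)$ of at most $n/r^k$ pseudolines crossing it, and $C(\mathrm{child}) \subseteq C(\mathrm{parent})$. Encode this as follows: store $T$ succinctly in $O(n^2)$ bits; for each non-root cell store a bitmask of length $|C(\mathrm{parent})|$ marking which pseudolines survive, whose total length telescopes to $\sum_k r^{2k}\cdot (n/r^{k-1}) = O(n^2)$ bits; and at each internal cell store a tiny table that, given the positions within $C(\Delta)$ of any two lines of $C(\Delta)$, returns the child of $\Delta$ containing their intersection.

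To answer a query $(a,b,c)$, descend from the root of $T$: at every cell on the path both $\ell_a$ and $\ell_b$ lie in $C(\Delta)$, so the table at $\Delta$ sends $v_{ab}$ to the correct child in constant time. After $O(\log n)$ steps we arrive at a leaf $\Delta^*$ whose conflict list has constant size; if $\ell_c \in C(\Delta^*)$ the answer is tabulated at the leaf, and otherwise the side of $\ell_c$ containing $\Delta^*$ is recovered by locating, via rank/select on the stored bitmasks, the highest cell on the path whose child dropped $\ell_c$ and reading off the stored sign. The main obstacle I anticipate is tightening the encoding from the naive $O(n^2\log n)$ bits down to $O(n^2)$: in particular, the routing tables must be succinctly represented and shared across cells of the same combinatorial type, and the descriptions of which lines of $C(\Delta)$ induce each child must be implicit in the cutting rather than stored by absolute pointers, so that together with the succinct tree and the bitmasks both the bit budget and the $O(\log n)$ query time are preserved.
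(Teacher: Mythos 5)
Your high-level architecture matches the paper's: dualize to a pseudoline arrangement, build a hierarchical $(1/r)$-cutting, and answer a query by descending the hierarchy while tracking the cell that contains $\ell_a \cap \ell_b$ and the status of $\ell_c$. You also correctly diagnose the main obstacle: storing, at each internal cell $\Delta$, a table indexed by pairs of pseudolines in $C(\Delta)$ costs $\Theta(|C(\Delta)|^2)$ bits, which over the $\Theta(\log n)$ levels sums to $\Theta(n^2\log n)$.

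The gap is in the proposed remedy. Sharing routing tables ``across cells of the same combinatorial type'' cannot be made to work at the upper levels of the hierarchy, where conflict lists are large. The routing behaviour of a cell with $m$ conflicting pseudolines and $O(r^2)$ subcells is essentially an abstract order type on $\Theta(m)$ pseudolines, of which there are $2^{\Theta(m^2)}$; thus a pointer into a shared table already needs $\Theta(m^2)$ bits (no savings over the naive table), and enumerating all tables is hopeless for $m = \Theta(n)$ at the root. What actually closes the gap in the paper is a per-pseudoline, per-cell encoding with a different flavour: for each pseudoline $\ell$ crossing a cell $\mathcal{C}$, store the two indices at which $\ell$ crosses the boundary of each subcell, measured in the cyclic permutation of all boundary crossings around that subcell, together with $O(1)$ status bits (above/below/crosses/contains) per subcell. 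This costs only $O(\log(\text{number of crossings}))$ bits per pseudoline per subcell rather than anything quadratic in $|C(\Delta)|$, and the whole hierarchy telescopes to $O\bigl(\frac{n^2}{t}(\log t + r)\bigr)$. The crucial observation making this sufficient is that two pseudolines intersect inside a full-dimensional subcell if and only if their boundary crossings alternate in that subcell's cyclic permutation, so the routing decision at $\Delta$ is computable from the two individual signatures $\sigma(\mathcal{C},\ell_a)$ and $\sigma(\mathcal{C},\ell_b)$ alone, with no joint table at all. Table sharing does appear in the paper, but only at the leaves, after the conflict lists have been shrunk to size $t = O(\sqrt{\log n})$: there the number of relevant order types is $2^{O(t^2)} = n^{O(1)}$, so leaf pointers fit in $O(\log n)$ bits and the $n^2/t^2$ leaves contribute $O(n^2)$ in total. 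Without the cyclic-permutation/alternation idea (or some substitute that makes each cell's routing decision computable from per-line rather than per-pair information), I do not see how to bring the upper levels of your scheme under the $O(n^2)$ budget.
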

\begin{theorem}\label{thm:realizable}
All realizable order types have a
\((O(\frac{n^2 {(\log \log n)}^2}{\log n}), O(\log n))\)-encoding.
\end{theorem}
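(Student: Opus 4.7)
The plan is to refine the cutting-based construction of Theorem~\ref{thm:abstract}, exploiting realizability only at the very bottom of the hierarchy via a Four-Russians-style lookup table. Passing to the dual arrangement of $n$ lines, a query on the triple $(a,b,c)$ becomes the question of on which side of $\ell_c$ the vertex $\ell_a\cap\ell_b$ lies. Set $g := \lceil \alpha \log n / \log\log n \rceil$ for a small constant $\alpha$ and build a hierarchical $(1/r)$-cutting of the arrangement with $r = n/g$. The cutting has $O(r^2) = O(n^2 (\log\log n)^2 / \log^2 n)$ leaf cells and depth $O(\log n)$, with each leaf crossed by at most $g$ lines of the arrangement.

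The skeleton of the cutting (cells, child pointers, constant-size boundary descriptions, and the auxiliary data supporting the $O(\log n)$-time descent of Theorem~\ref{thm:abstract}) occupies $O(\log n)$ bits per node, for a total of $O((n/g)^2 \log n) = O(n^2(\log\log n)^2/\log n)$ bits. The saving over Theorem~\ref{thm:abstract} comes from how the leaves are encoded: in the realizable setting, the arrangement restricted to a leaf cell $\Delta$ defines a \emph{realizable} order type on at most $g$ lines. Since there are only $2^{O(g\log g)} = 2^{O(\alpha\log n)}$ such order types, choosing $\alpha$ small yields at most $n^{O(1)}$ canonical labels, each of length $O(g\log g) = O(\log n)$ bits. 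Storing one label per leaf contributes $O((n/g)^2 \log n)$ more bits, and a single global lookup table indexed by (label, triple of local indices) returns every local orientation in $n^{O(1)}\cdot g^{O(1)}$ bits, which is negligible compared to the target.

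A query $(a,b,c)$ is then answered by descending the cutting tree in $O(\log n)$ time to the leaf $\Delta$ containing $\ell_a\cap \ell_b$, translating $a$, $b$, $c$ into their local indices inside $\Delta$, and reading the orientation off the lookup table in $O(1)$ time. The main obstacle will be supporting both the descent and the global-to-local index translation in $O(\log n)$ time without reintroducing the $\Omega(n^2\log n)$ bits of routing tables that a naive descent would require. The intended workaround is to reuse the descent mechanism of Theorem~\ref{thm:abstract} on only the $O(1)$ boundary lines of each cell, and to maintain the local indices of the query lines along the descent via succinct rank/select structures layered on top of the conflict lists of the cutting. Realizability is only used at the leaves, which is precisely where the gap between $2^{\Theta(n^2)}$ abstract and $2^{\Theta(n\log n)}$ realizable order types can be harvested.
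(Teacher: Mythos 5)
Your high-level plan is the paper's: build a hierarchical cutting, carry the query pair's intersection down to a leaf crossed by $t=\Theta(\log n/\log\log n)$ lines, and exploit $\nu(t)=2^{\Theta(t\log t)}=n^{O(1)}$ to share canonical lookup tables across leaves. The space arithmetic at the leaf level (canonical labels of $O(t\log t)$ bits, $n^{O(1)}$ shared tables) is exactly right and is where realizability is harvested.

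The gap is the one you flag yourself and then paper over: how to locate $\ell_a\cap\ell_b$ in the cutting, and carry it from cell to subcell, using only bits and in $O(\log n)$ total time. Your proposed workaround (``reuse the descent mechanism \ldots\ on only the $O(1)$ boundary lines of each cell'' plus rank/select on conflict lists) does not work: knowing where $\ell_a$ and $\ell_b$ sit relative to the $O(1)$ boundary lines of the current cell does not determine which of the $\Theta(r^2)$ subcells contains their intersection, and the rank/select layer only translates indices, it does not route. Also, ``$O(\log n)$ bits per node'' is false for internal nodes --- a level-$i$ cell must store routing information for all $\Theta(n/r^i)$ lines crossing it, i.e.\ $\Theta((n/r^i)\log(n/r^i))$ bits --- the total happens to come out right only because the geometric sum is dominated by the leaf level. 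The ingredient the paper supplies (and you are missing) is the cyclic-permutation observation: two pseudolines cross inside a $2$-cell iff their crossings with the cell boundary \emph{alternate} in the cyclic order around that boundary. Storing, per cell and per line, the line's two indices in that cyclic permutation (plus a two-bit above/below/crosses flag per subcell) costs $O(r^2 + r\log(n/r^{i+1}))$ bits per cell-line pair and supports the descent by scanning two signatures in parallel per level; summed over the hierarchy this is $O(\tfrac{n^2}{t}(\log t + r))$ bits, which with constant $r$ and $t=\Theta(\log n/\log\log n)$ matches the leaf cost of $O(n^2(\log\log n)^2/\log n)$. Without something equivalent to this, the descent cannot be carried out in the stated space, and the proof does not go through.
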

\begin{theorem}\label{thm:preprocessing}
In the real-RAM model and the constant-degree algebraic decision tree model,
given \(n\) real-coordinate input points in \(\mathbb{R}^2\) we can compute
the encoding of their order type as in
Theorems~\ref{thm:abstract}~and~\ref{thm:realizable} in \(O(n^2) \) time.
\end{theorem}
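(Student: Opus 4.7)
The plan is to realize the encodings of Theorems~\ref{thm:abstract} and~\ref{thm:realizable} as direct byproducts of Chazelle's hierarchical $(1/r)$-cutting algorithm applied to the dual line arrangement, and then to observe that the same algorithm fits the claimed $O(n^2)$ time bound for suitable choices of $r$. I would begin by dualizing the input: each point $p_i = (a_i, b_i)$ maps to the line $\ell_i : y = a_i x - b_i$, and the orientation predicate becomes a sign-of-determinant test on line coefficients, which is a constant-degree algebraic primitive suitable for the algebraic decision tree model.

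Next, I would invoke Chazelle's construction, which builds a hierarchical $(1/r)$-cutting of $n$ lines in $O(nr)$ time together with the conflict list of every cell. For Theorem~\ref{thm:abstract}, choose $r = \Theta(n)$: every leaf has constant complexity, the algorithm runs in $O(n^2)$ time, and its output is essentially the encoding itself. For Theorem~\ref{thm:realizable}, stop the recursion earlier, at $r = \Theta(n \log\log n / \log n)$ so that each leaf is crossed by $g = \Theta(\log n / \log\log n)$ lines; Chazelle's bound again gives $O(nr) = O(n^2 \log\log n / \log n) = o(n^2)$ time.

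The remaining work, specific to Theorem~\ref{thm:realizable}, is to replace each leaf of the shortened cutting by the compact catalog pointer used in the encoding. Because only $2^{O(g \log g)} = n^{O(1)}$ realizable order types on $g$ elements exist, a canonical identifier fits in $O(\log n)$ bits. I would extract such an identifier per leaf from the combinatorial data already produced by the cutting — the relative order and the orientation signs of the $g$ crossing lines — and deduplicate identifiers using hashing on integer tokens. The real-arithmetic work amounts to $O(g^2)$ sign tests per leaf, summing to $O((n/g)^2 \cdot g^2) = O(n^2)$ and dominating the hashing and catalog-maintenance costs.

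The main obstacle is to verify that every step stays within the constant-degree algebraic decision tree model. The cutting construction and the leaf-level determinant signs are already of constant algebraic degree; the only potentially problematic ingredient is the hashing used during catalog assembly, but this acts exclusively on integer tokens derived from the combinatorial output of the cutting — never on the real coordinates themselves — so it is charged to the word-RAM part of the model and does not inflate the algebraic-decision-tree budget. Combining the three contributions yields the stated $O(n^2)$ preprocessing bound in both models simultaneously.
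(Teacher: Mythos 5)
Your plan is aligned with the paper's actual proof in its broad strokes---dualize, run Chazelle's hierarchical cutting construction, record the combinatorial data per cell, and deduplicate leaf-level lookup tables---and the overall time accounting comes out to $O(n^2)$ as claimed, so the approach is essentially right. That said, three points deserve sharper handling. First, for the abstract case you set $t = O(1)$ (your ``$r=\Theta(n)$''), which differs from the paper's choice $t = \Theta(\sqrt{\log n})$; this still fits in $O(n^2)$ bits and $O(n^2)$ time and is arguably a simplification (constant-size leaves need no table deduplication at all), but it does not produce the \emph{same} data structure described in Section~\ref{sec:abstract}, so you should either justify that it still meets the bounds of Theorem~\ref{thm:abstract} or stick with the paper's parameters. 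Second, in the realizable case the deduplication step is not optional cosmetics: without a \emph{canonical} labeling of the lines crossing each leaf, isomorphic leaf order types with different index orderings will not collide, the $(n/t)^2$ leaves each keep their own $O(t^3)$-bit table, and the space explodes to $\Theta(n^2 t)$ rather than the required $O(n^2(\log\log n)^2/\log n)$. The paper resolves this by invoking the $O(t^2)$-time canonical labeling of~\cite{AILOW14} per leaf (summing to $O(n^2)$) and inserting the canonical representation into a trie; your phrase ``extract a canonical identifier from the relative order and orientation signs'' glosses over exactly this nontrivial step, and your ``hashing'' substitute is a randomized device where the paper's trie is deterministic---worth being explicit about if you insist on it. Third, your budget counts only the $O(t^2)$ sign tests per leaf for canonicalization; you should also charge the $O(t^3\nu(t))$ time to populate the at-most-$\nu(t)$ distinct lookup tables, which remains $O(n^2)$ for small enough $\delta$ but is a separate term that needs to be named.
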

For instance, Theorem~\ref{thm:realizable} implies that for any set of points
\(\{\, p_1, p_2, \ldots, p_n\,\}\), there exists a string of \(O(n^2 {(\log
\log n)}^2 / \log n)\) bits such that given this string and any triple of
indices \((a,b,c) \in {[n]}^3\) we can compute the value of \(\chi(a,b,c) =
\nabla(p_a, p_b, p_c)\) in \(O(\log n)\) time.

Throughout the rest of this paper, we will assume that we can access some
arrangement of lines or pseudolines that realizes the order type we want to
encode. We thus exclusively focus on the problem of encoding the order type of
a given arrangement. This does not pose a threat against the existence of an
encoding. However, we have to be more careful when we bound the preprocessing
time required to compute such an encoding. This is why, in
Theorem~\ref{thm:preprocessing}, we specify the model of computation and how
the input is given.

\paragraph*{Hierarchical Cuttings} We encode the order type of an arrangement via
hierarchical cuttings as defined in~\cite{C93}. A cutting in \( \mathbb{R}^d \)
is a set of (possibly unbounded and/or non-full dimensional)
constant-complexity cells that together partition \(\mathbb{R}^{d}\). A
\(\frac{1}{r}\)-cutting of a set of \(n\) hyperplanes is a cutting with the
constraint that each of its cells is intersected by at most \(\frac{n}{r}\)
hyperplanes. There exist various ways of constructing \(\frac{1}{r}\)-cuttings of
size \(O(r^d)\). Those cuttings allow for efficient divide-and-conquer
solutions to many geometric problems. The hierarchical cuttings of Chazelle
have the additional property that they can be composed without multiplying the
hidden constant factors in the big-oh notation. In particular, they allow
for \(O(n^d)\)-space \(O(\log n)\)-query \(d\)-dimensional point location data
structures (for constant \(d\)). In the plane, hierarchical cuttings can be
constructed for arrangement of pseudolines with the same properties.

\paragraph*{Idea} We want to preprocess \(n\) pseudolines \( \{\, \ell_1 ,
\ell_2 , \ldots, \ell_n\,\} \) in the plane so that, given three indices \(a\),
\(b\), and \(c\), we can compute their orientation, that is, whether the
intersection \(\ell_a \cap \ell_b\) lies above, below or on \(\ell_c\). Our
data structure builds on cuttings as follows: Given a cutting \(\Xi\) and the
three indices, we can locate the intersection of \(\ell_a\) and \(\ell_b\)
inside \(\Xi\). The location of this intersection is a cell of \(\Xi\). The
next step is to decide whether \(\ell_c\) lies above, lies below, contains or
intersects that cell. In the first three cases, we are done. Otherwise, we
can answer the query by recursing on the subset of pseudolines intersecting the
cell containing the intersection. We build on hierarchical cuttings to solve
all subproblems efficiently.

\paragraph*{Intersection Location} When the \(\ell_a\) are straight lines,
locating the intersection \(\ell_a \cap \ell_b\) in \(\Xi\) is trivial if we
know the real parameters of \(\ell_a\) and \(\ell_b\) and of the descriptions
of the subcells of \(\Xi\). However, in our model we are not allowed to store
real numbers. To circumvent this annoyance, and to handle arrangements of
pseudolines, we make a simple observation illustrated by
Figure~\ref{fig:permutation}.
\begin{figure}
\centering{}
\begin{subfigure}[t]{0.5\textwidth}
\centering{}
\includegraphics[scale=.9]{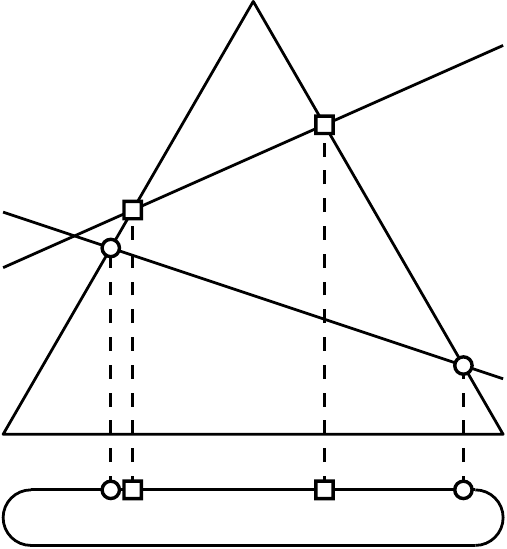}
\caption{\(\ell_a \cap \ell_b \cap \mathcal{C} = \emptyset\).}
\end{subfigure}%
\begin{subfigure}[t]{0.5\textwidth}
\centering{}
\includegraphics[scale=.9]{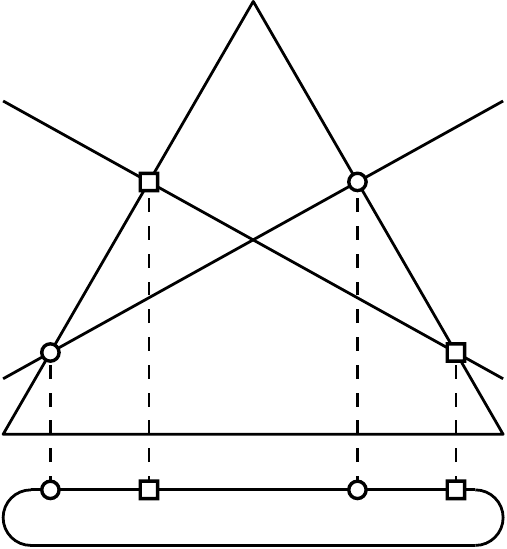}
\caption{\(\ell_a \cap \ell_b \cap \mathcal{C} \neq \emptyset\).}
\end{subfigure}
\caption{Cyclic permutations.}\label{fig:permutation}
\end{figure}
\begin{observation}
Two pseudolines \(\ell_a\) and \(\ell_b\) intersect in the interior of a
full-dimensional cell \(\mathcal{C}\) if and only if each pseudoline
properly intersects the boundary of \(\mathcal{C}\) exactly twice and their
intersections with its boundary alternate.
\end{observation}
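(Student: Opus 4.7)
The plan is to prove both directions via a Jordan-arc separation argument inside the cell $\mathcal{C}$, exploiting the defining property of a pseudoline arrangement: any two pseudolines cross in exactly one point, and each pseudoline is a simple curve that splits the plane into two unbounded components.

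For the forward direction, I assume $\ell_a \cap \ell_b$ lies in the interior of $\mathcal{C}$. Since each pseudoline is an unbounded simple curve, if it passes through an interior point of the bounded region $\mathcal{C}$ it must cross $\partial \mathcal{C}$ at least twice, once ``entering'' and once ``leaving.'' To see that it crosses exactly twice, note that $\ell_a \cap \overline{\mathcal{C}}$ would otherwise contain three distinct boundary crossings on a single simple curve, forcing $\ell_a$ to cross $\partial \mathcal{C}$ at least three times, which contradicts the constant-complexity/pseudoline constraint (each edge of $\partial \mathcal{C}$, being part of a pseudoline or a bounding object, is crossed at most once by $\ell_a$, and by choice of cutting the total is two). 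For alternation, observe that $\ell_a$ restricted to $\overline{\mathcal{C}}$ is a simple arc with both endpoints on $\partial \mathcal{C}$; by the Jordan arc theorem applied inside $\mathcal{C}$, it partitions $\mathcal{C}$ into two components, and the two endpoints of $\ell_a$ split $\partial \mathcal{C}$ into two arcs, one bordering each component. Since $\ell_a \cap \ell_b$ lies in the interior of $\mathcal{C}$, the pseudoline $\ell_b$ crosses $\ell_a$ inside $\mathcal{C}$, so the two boundary points of $\ell_b \cap \overline{\mathcal{C}}$ must lie in different components, hence on different boundary arcs determined by the endpoints of $\ell_a$. That is exactly the alternation condition.

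For the reverse direction, I assume each of $\ell_a, \ell_b$ crosses $\partial \mathcal{C}$ properly exactly twice and the four crossings alternate around $\partial \mathcal{C}$. Then $\ell_a \cap \overline{\mathcal{C}}$ is again a simple arc separating $\mathcal{C}$ into two components, and alternation places the two endpoints of $\ell_b \cap \overline{\mathcal{C}}$ on opposite boundary arcs, hence in opposite components. The arc $\ell_b \cap \overline{\mathcal{C}}$ is connected and joins points in different components of $\mathcal{C} \setminus \ell_a$, so it must meet $\ell_a$ in the interior of $\mathcal{C}$. Uniqueness of pseudoline intersections implies this meeting point is the unique intersection $\ell_a \cap \ell_b$, and since the crossings on $\partial \mathcal{C}$ are proper (transverse and in the relative interior of boundary edges), this intersection cannot lie on $\partial \mathcal{C}$ and is therefore in the interior.

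The only genuinely delicate step is the Jordan-arc separation used in both directions, which is purely topological and classical for simple arcs in a disk; applying it to a cell of a pseudoline cutting (which is a topological disk bounded by finitely many pseudoline arcs and possibly segments of a bounding box) requires only that we work in the closure $\overline{\mathcal{C}}$ and note that our arcs $\ell_a \cap \overline{\mathcal{C}}$ and $\ell_b \cap \overline{\mathcal{C}}$ are simple and meet $\partial \mathcal{C}$ transversally at their endpoints. Everything else reduces to the observation that two pseudolines cross at most once, so the ``alternation forces a crossing'' direction yields a crossing that must be \emph{the} unique intersection point, automatically located in the interior of $\mathcal{C}$.
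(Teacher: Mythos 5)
The paper states this Observation without proof, treating it as self-evident and illustrating it with Figure~2 (cyclic permutations), so there is no ``paper proof'' to compare against. Your Jordan-arc separation argument is a correct and reasonably standard way to make the claim rigorous, and it covers both directions. Two minor caveats are worth making explicit rather than leaving implicit. First, the ``exactly twice'' clause is not a consequence of pseudoline topology alone: the Jordan curve theorem only gives \emph{at least} two boundary crossings for a pseudoline entering a bounded cell, while the \emph{at most} two comes from the cutting construction itself (vertical decomposition for pseudolines, bottom-vertex triangulation for lines), which is exactly why the paper chooses those decompositions. You gesture at this with ``by choice of cutting the total is two,'' but it deserves to be stated as a hypothesis rather than an incidental remark, since for an arbitrary convex cell a pseudoline can cross the boundary more than twice. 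Second, your forward direction assumes $\mathcal{C}$ is bounded (``passes through an interior point of the bounded region''); the paper's cutting cells may be unbounded, so strictly one should compactify (bounding box or projective plane) or treat the unbounded edges as meeting at infinity for the cyclic order to be well-defined. Neither issue is a gap relative to the paper, which makes the same tacit assumptions, but since you are supplying the proof the paper omits, it would be cleaner to state them. With those two hypotheses spelled out, your argument --- $\ell_a\cap\overline{\mathcal{C}}$ is a separating arc of the topological disk $\overline{\mathcal{C}}$, a single transversal crossing of $\ell_b$ forces its boundary endpoints into opposite components (forward), and conversely alternation forces a crossing which by uniqueness of pseudoline intersections must be \emph{the} intersection and by distinctness of the four boundary points must be interior (reverse) --- is sound.
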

This gives us a way to encode the location of the intersection of \(\ell_a\)
and \(\ell_b\) in \(\Xi\) using only bits. For an arrangement of pseudolines,
we use the standard vertical decomposition to construct a hierarchical cutting,
which guarantees that a pseudoline intersects a cell boundary at most twice.
For an arrangement of lines, we can use the standard bottom-vertex
triangulation instead, which will allow us to generalize our results to higher
dimensions in Section~\ref{sec:hyperplanes}. In the plane, the bottom-vertex
triangulation partitions the space into triangular cells. We define the
\emph{cyclic permutation} of a full-dimensional cell \(\mathcal{C}\) and a
finite set of pseudolines \(\mathcal{L}\) to be the finite sequence of properly
intersecting pseudolines from \(\mathcal{L}\) encountered when walking along
the boundary of \(\mathcal{C}\) in clockwise or counterclockwise order, up to
rotation and reversal.

Note that non-full-dimensional cells are easier to encode. For a 0-dimensional
cell and a pseudoline, we store whether the pseudoline lies above, lies below,
or contains the 0-dimensional cell. For a 1-dimensional cell, a pseudoline
could also intersect the interior of the cell, but in only one point. The
intersections with that cell define an (acyclic) permutation with potentially
several intersections at the same position. This information suffices to answer
location queries for those cells, and the space taken is not more than that
necessary for full-dimensional cells. When two pseudolines intersect in a
1-dimensional cell or contain the same 0-dimensional cell, they will
appear simultaneously in the cyclic permutation of an adjacent 2-dimensional
cell if they intersect its interior. If that is the case, the location of
the intersection of those two pseudolines in the cutting is the
non-full-dimensional cell. A constant number of bits can be added to the
encoding each time we need to know the dimension of the cell we encode.

\paragraph*{Encoding} Given \(n\) pseudolines in the plane and some fixed
parameter \(r\), compute a hierarchical \(\frac{1}{r}\)-cutting of those
pseudolines. This hierarchical cutting consists of \( \ell \) levels
labeled \(0,1,\ldots,\ell-1\). Level \(i\) has \(O(r^{2i})\)
cells. Each of those cells is further partitioned into
\(O(r^2)\) subcells. The \(O(r^{2(i+1)})\) subcells of level \(i\)
are the cells of level \(i+1\). Each cell of level \(i\) is intersected by at
most \(\frac{n}{r^i}\) pseudolines, and hence each subcell is intersected by at
most \(\frac{n}{r^{i+1}}\) pseudolines.

We compute and store a combinatorial representation of the hierarchical cutting
as follows: For each level of the hierarchy, for each cell in that level, for
each pseudoline intersecting that cell, for each subcell of that cell, we store
two bits to indicate the location of the pseudoline with respect to that
subcell, that is, whether the pseudoline lies above (\texttt{00}), lies below
(\texttt{01}) or intersects the interior of that subcell (\texttt{10}). When a
subcell is non-full-dimensional, we use another value (\texttt{11}) when the
pseudoline contains the subcell. When a pseudoline intersects the interior of a
2-dimensional subcell, we also store the two indices of the intersections of
that pseudoline with the subcell in the cyclic permutation associated with that
subcell, beginning at an arbitrary location in, say, clockwise order. If the
intersected subcell is 1-dimensional instead, we store the index of the
intersection in the acyclic permutation associated with that subcell, beginning
at an arbitrary endpoint. If two pseudolines intersect in the interior of a
1-dimensional subcell or on the boundary of a 2-dimensional subcell, they share
the same index in the associated permutation.

This representation takes \(O(\frac{n}{r^i} + \frac{n}{r^{i+1}}
\log{\frac{n}{r^{i+1}}})\) bits per subcell of level \(i\) by storing for each
pseudoline its location and, when needed, the permutation indices of its
intersections with the subcell. For each of the \(\lambda =
O(\frac{n^2}{t^2})\) subcells of the last level of the hierarchy we store a
pointer to a lookup table of size \(\tau = O(t^3)\) that allows to answer the
query of the orientation of any triple of pseudolines intersecting that
subcell. The number \(t = \frac{n}{r^\ell}\) denotes an upper bound on the
number of pseudolines intersecting each of those subcells.

Storing the permutation at each subcell would suffice to answer all
queries that do not reach the last level of the hierarchy. However, to get fast
queries, we need to have access to all bits belonging to a given pseudoline
without having to read the bits of the others. Using the Zone Theorem, and the
fact that hierarchical cuttings are constructed by decompositions of subsets of
the input pseudolines, we can bound the number of bits stored for a single
pseudoline intersecting a given cell of level \(i\) by \(\zeta_i = O(r^2 + r
\log{\frac{n}{r^{i+1}}})\). This allows us to store all bits belonging to a
given cell-pseudoline pair \((\mathcal{C} , \ell)\) in a contiguous block of
memory \(\sigma(\mathcal{C}, \ell)\) whose location in the encoding is easy to
compute. We call \(\sigma(\mathcal{C}, \ell)\) the \emph{signature} of \(\ell\)
in \(\mathcal{C}\). The overall number of bits stored stays the same up to a
constant factor.

For queries that reach the last level of the hierarchy, storing an individual
lookup table for each leaf would cost too much as soon as \(t = \omega(1)\).
However, as long as \(t\) is small enough, each order type is shared by
many leaves, and we can thus reuse space. Formally, let \(\nu(n)\) denote the
number of order types of size \(n\), which is \(\nu(n) = 2^{\Theta(n^2)}\) for
abstract order types and \(\nu(n) = 2^{\Theta(n \log{n})}\) for realizable
order types. At most \(\nu(t)\) distinct lookup tables are needed for answering
the queries on the subcells of the last level of the hierarchy. Hence the
pointers have size \(\Psi = O(\log \nu(t))\) and the total size needed for the lookup
tables is \(O(t^3 \nu(t))\). For each leaf, we store a canonical
labeling of size \(\kappa = O(t \log t)\) on the pseudolines that intersect it. We use
that canonical labeling to order the queries in the associated lookup table.

The encoding is the concatenation of the parameters \(n\), \(r\), and \( t\),
all signatures \(\sigma(\mathcal{C},\ell)\) for all pairs \((\mathcal{C},
\ell)\) with \(\ell \cap C\), the canonical labelings at the leaves, the leaf
pointers and the lookup tables. We define a canonical order on the cells of
level \(i\) so that they can be ordered as \(\{\, \mathcal{C}_{i,1},
\mathcal{C}_{i,2}, \ldots, \mathcal{C}_{i,O(r^{2i})}\,\}\).
We complement the encoding with appropriate padding so that the
position \(\rho(\mathcal{C}_{i,j}, \ell)\) of the signature
\(\sigma(\mathcal{C}_{i,j}, \ell)\) is
\begin{displaymath}
\rho(\mathcal{C}_{i,j}, \ell)
= \rho(\mathcal{C}_{i-1,1}, \ell_0^{\mathcal{C}_{i-1,1}})
+ c r^{2i-2} \left\lfloor \frac{n}{r^{i-1}} \right\rfloor \zeta_{i-1}
+ (j-1) \left\lfloor \frac{n}{r^i} \right\rfloor \zeta_i
+ \pi(\mathcal{C}_i, \ell) \zeta_i,
\end{displaymath}
where \(c\) is some constant, \(\pi(\mathcal{C}, \ell)\)
is the first index of \(\ell\) in the cyclic permutation of \(\mathcal{C}\),
\(\ell_a^{\mathcal{C}}\) is the pseudoline such that \(\pi(\mathcal{C},
\ell_a^{\mathcal{C}}) = a\), and, for the root cell of the hierarchy
\(\mathcal{C}_{0,1}\) representing the entire space and containing all the
intersections of the arrangement, \(\rho(\mathcal{C}_{0,1}, \ell_0)\) is the
position after the encoding of the parameters \(n\), \(r\), and \(t\), and
\(
\rho(\mathcal{C}_{0,1}, \ell_a) = \rho(\mathcal{C}_{0,1}, \ell_0) + a \zeta_0
\).
The canonical labeling of the first leaf is stored at position
\begin{displaymath}
\rho_{\Lambda_0} = c \sum_{i=0}^{\ell-1} r^{2i} \left\lfloor \frac{n}{r^i}
\right\rfloor \zeta_i,
\end{displaymath}
the lookup table pointer of the first leaf is stored at position
\(\rho_{\Lambda_0} + \kappa\), the canonical labeling of leaf \(\Lambda\) is stored
at position \(\rho_{\Lambda_0} + ( \Lambda - 1 ) ( \kappa + \Psi ) \) and its
table lookup pointer at \(\kappa\) from that position. The first lookup table
is stored at position \(\rho_{\Lambda_0} + \lambda ( \kappa + \Psi )\) and
lookup table \(\Theta\) is stored at position
\(\rho_{\Lambda_0} + \lambda ( \kappa + \Psi ) + (\Theta - 1) \tau\).

\paragraph*{Space Complexity}
We prove a general bound on the space taken by our construction when the
hierarchy contains \(\ell\) levels.
Let $H_r^\ell(n) \in \mathbb{N}$ be the maximum amount of space (bits), over
all arrangements of \(n\) pseudolines, taken by the $\ell \in \mathbb{N}$
levels of a hierarchy with parameter $r \in (1,+\infty)$. This excludes the
space taken by the lookup tables, their associated pointers and canonical
labelings at the leaves, and the parameters of the hierarchy \(n\), \(r\) and
\(t\).

\begin{lemma}\label{lem:space-2-hierarchy}
For \( r \geq 2 \) we have
\begin{displaymath}
H_r^\ell(n)
=
O\left(\frac{n^2}{t} ( \log t + r )\right).
\end{displaymath}
\end{lemma}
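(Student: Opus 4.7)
The plan is to bound the space consumed at each level of the hierarchy separately and then observe that, because $r \geq 2$, the resulting sum is a geometric series dominated by its last term.

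First I would account for the contribution of a single level $i$. Level $i$ contains $O(r^{2i})$ cells, each cell is intersected by at most $n/r^i$ pseudolines, and the signature of a pseudoline in such a cell has been bounded in the paragraph preceding the lemma by $\zeta_i = O(r^2 + r\log(n/r^{i+1}))$ bits (this is the step that uses the Zone Theorem applied to the subset of pseudolines intersecting the cell). Multiplying these three quantities gives
\begin{displaymath}
S_i \;=\; O\!\left( r^{2i} \cdot \frac{n}{r^i} \cdot \zeta_i \right)
\;=\; O\!\left( n r^{i+2} + n r^{i+1} \log \frac{n}{r^{i+1}} \right).
\end{displaymath}

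Next I would sum $S_i$ over $i = 0, 1, \ldots, \ell-1$. Since $r \geq 2$, both $n r^{i+2}$ and $n r^{i+1}\log(n/r^{i+1})$ grow geometrically in $i$ (the logarithmic factor only shrinks, and is dominated by the geometric growth of $r^{i+1}$), so the sum is asymptotically the last term:
\begin{displaymath}
H_r^\ell(n) \;=\; \sum_{i=0}^{\ell-1} S_i
\;=\; O\!\left( n r^{\ell+1} + n r^\ell \log \frac{n}{r^\ell} \right).
\end{displaymath}
Finally I would substitute the relation $t = n/r^\ell$, so $r^\ell = n/t$ and $n r^\ell = n^2/t$ and $n r^{\ell+1} = r\cdot n^2/t$, giving
\begin{displaymath}
H_r^\ell(n) \;=\; O\!\left( \frac{n^2}{t}\,(r + \log t) \right),
\end{displaymath}
which is exactly the claimed bound.

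The only nontrivial step is the geometric-sum argument: one must verify that the logarithmic factor $\log(n/r^{i+1})$ does not interfere with the domination by the last term. This is clear because $\log(n/r^{i+1}) \leq \log n$ is absorbed by the geometric factor $r^{i+1}$ with ratio $r \geq 2$, so the ratio of consecutive terms is at least $r/2 \geq 1$ for $n$ sufficiently large; equivalently one may crudely bound each $\log$-factor by $\log(n/r) \leq \log(n/t\cdot r^{\ell-1}) \leq \log t + \ell\log r$ and then absorb the lower-order contributions into the final-level term. This concludes the sketch.
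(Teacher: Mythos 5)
Your primary argument is essentially the same as the paper's, just packaged differently: you bound the per-level contribution $S_i = O(nr^{i+2} + nr^{i+1}\log(n/r^{i+1}))$ exactly as the paper does, then argue the sum over $i$ is dominated by the last level; the paper instead factors $n^2/t$ out front and invokes the series inequalities $\sum x^i \le 1/(1-x)$ and $\sum i x^i \le x/(1-x)^2$ with $x = 1/r$. These are two faces of the same coin, and your final substitution $t = n/r^\ell$ is identical.

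Two remarks on rigor. First, ``the ratio of consecutive terms is at least $r/2 \ge 1$'' is not by itself enough for a geometric-sum conclusion: a ratio of exactly $1$ would cost a factor $\ell$. One must show the ratio is bounded away from $1$. In fact the correct ratio is
\begin{displaymath}
\frac{S_{i+1}}{S_i} = r\cdot\frac{r + \log(n/r^{i+2})}{r + \log(n/r^{i+1})} \ge \frac{r^2}{r + \log r} \ge \frac{4}{3}
\qquad\text{for } r \ge 2,
\end{displaymath}
so the domination claim does hold, but this extra step is needed. Second, and more importantly, your ``equivalent'' alternative — uniformly bounding $\log(n/r^{i+1}) \le \log t + \ell \log r$ and summing — does \emph{not} work. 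It gives
$\sum_i nr^{i+1}\log(n/r^{i+1}) = O\bigl(nr^\ell(\log t + \ell\log r)\bigr)$, and $\ell\log r = \Theta(\log(n/t))$ cannot be absorbed into $r$; in the regimes of interest ($r$ constant, $t = O(\log n)$) this leaves a spurious $\log n$ rather than the needed $\log t + r = O(\log\log n)$. The whole point — made explicit in the paper's $\sum i r^{-i} = O(1)$ step and implicit in your ratio argument — is that the shrinking log factor cancels against the geometric growth; the crude uniform bound throws this cancellation away.
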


\begin{proof}
By definition, we have
\begin{displaymath}
H_r^\ell(n)
= O \left(
\sum_{i=0}^{\ell-1} \left(
	r^{2i} \cdot r^2 \cdot \left(
\frac{n}{r^i} + \frac{n}{r^{i+1}} \log \frac{n}{r^{i+1}}
\right)
\right)
\right).
\end{displaymath}
We multiply the previous equation by \(\frac{n}{tr^\ell} = 1\)
\begin{displaymath}
H_r^\ell(n)
= O \left(
\frac{n^2}{t}
\sum_{i=0}^{\ell-1} \left(
\frac{1}{r^{\ell - i - 1}} \cdot \left(
	r + \log \frac{n}{r^{i+1}}
\right)
\right)
\right).
\end{displaymath}
We use the equivalence \(\frac{n}{r^{i+1}} = t r^{\ell - i - 1}\) to replace
the last term in the previous equation
\begin{displaymath}
H_r^\ell(n)
= O \left(
\frac{n^2}{t}
\sum_{i=0}^{\ell-1} \left(
\frac{1}{r^{\ell - i - 1}} \cdot \left(
	r + \log t + (\ell - i - 1) \log r
\right)
\right)
\right).
\end{displaymath}
We reverse the summation by redefining \(i \leftarrow \ell - i - 1\) and group the terms
\begin{displaymath}
H_r^\ell(n)
=
O \left(
\frac{n^2}{t} \left(
	(\log t + r)  \sum_{i=0}^{\ell-1} \frac{1}{r^{i}}
	+
	\log r \sum_{i=0}^{\ell-1} \frac{i}{r^{i}}
\right)
\right).
\end{displaymath}
Using the following inequalities:
\begin{displaymath}
\sum_{i = 0}^{k} x^i
\le
\frac{1}{1-x}
\qquad
\text{and}
\qquad
\sum_{i = 0}^{k} i x^i
\le
\frac{x}{{(1-x)}^2},
\qquad
\forall k \in \mathbb{N},
\forall x \in (0,1),
\end{displaymath}
we conclude that
\begin{displaymath}
H_r^\ell(n)
=
O\left(
\frac{n^2}{t} \left(
	\left(1 + \frac{1}{r-1} \right) (\log t + r)
	+
	\left(1 + \frac{2r-1}{r^2-2r+1}\right) \frac{\log r}{r}
\right)
\right),
\end{displaymath}
and that for $r \geq 2$
\begin{displaymath}
H_r^\ell(n)
=
O\left(\frac{n^2}{t} (\log t + r)\right). \qedhere
\end{displaymath}
\end{proof}

Taking into account the space taken by the other bits of the encoding we
obtain
\begin{lemma}\label{lem:space-2-all}
The space taken by our encoding is
\begin{displaymath}
	S_r^\ell(n) =
	O\left(
	\log ntr +
	\frac{n^2}{t} ( \log t + r ) +
	t^3 \nu(t) + \frac{n^2}{t^2} ( \log \nu(t) + t \log t )
	\right).
\end{displaymath}
\end{lemma}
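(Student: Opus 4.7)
The plan is to read off the four contributions that make up the encoding from the construction description just above the lemma, bound each one separately, and sum them. There is no real cleverness involved; the work is bookkeeping, with the only nontrivial ingredient being Lemma~\ref{lem:space-2-hierarchy}, which has already been established.

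First I would account for the header of the encoding, namely the three parameters $n$, $r$, and $t$. Each fits in $O(\log n + \log r + \log t)$ bits, which I rewrite as $O(\log ntr)$ to match the statement. Next, for the signatures $\sigma(\mathcal{C},\ell)$ summed over all cell-pseudoline pairs in every level of the hierarchy, I appeal directly to Lemma~\ref{lem:space-2-hierarchy}, which gives exactly $O\!\left(\frac{n^2}{t}(\log t + r)\right)$ bits for $r\ge 2$. This handles the dominant hierarchical part of the encoding, and crucially it already absorbs the padding we inserted to make the positions $\rho(\mathcal{C}_{i,j},\ell)$ computable, since padding only changes hidden constants.

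Third, I would bound the space for the lookup tables. The text tells us there are at most $\nu(t)$ distinct order types realizable on the $\le t$ pseudolines crossing any leaf, so at most $\nu(t)$ tables are stored, each of size $\tau = O(t^3)$, giving $O(t^3 \nu(t))$ bits. Fourth, the per-leaf overhead: for each of the $\lambda = O(n^2/t^2)$ leaves we store a canonical labeling of length $\kappa = O(t \log t)$ and a pointer of length $\Psi = O(\log \nu(t))$, contributing $O\!\left(\frac{n^2}{t^2}(\log \nu(t) + t \log t)\right)$ bits in total.

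Adding the four pieces gives the claimed $S_r^\ell(n)$. The main (very mild) obstacle is simply being sure that the padding scheme described via the $\rho(\mathcal{C}_{i,j},\ell)$ formula does not increase any level's contribution beyond what Lemma~\ref{lem:space-2-hierarchy} counts; this follows because each signature slot is padded to the uniform size $\zeta_i$ already used inside the proof of that lemma, and the number of slots reserved per cell, $\lfloor n/r^i\rfloor$, matches the upper bound used there. With that observation in hand, each summand above is a direct substitution of the parameters defined in the construction, and the lemma follows.
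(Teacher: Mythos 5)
Your proposal is correct and takes the same approach that the paper leaves implicit: the lemma is stated without an explicit proof because it is precisely the sum of the four bookkeeping terms you identify (parameter header, hierarchical signatures via Lemma~\ref{lem:space-2-hierarchy}, distinct lookup tables, and per-leaf labelings plus pointers). Your observation about padding is also sound and matches the paper's remark that ``the overall number of bits stored stays the same up to a constant factor,'' since the $\rho$ formula reserves at most $O(r^{2i})\cdot\lfloor n/r^i\rfloor\cdot\zeta_i$ bits per level, exactly the quantity Lemma~\ref{lem:space-2-hierarchy} already sums.
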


We pick \(r\) constant for both abstract and realizable order type. For
abstract order types, we choose \(t = \sqrt{\delta \log n}\) for small enough
\(\delta\) and the last term in Lemma~\ref{lem:space-2-all} dominates with \(n^2\).
Note how the quadratic bottleneck of this encoding is the storage of the order
type pointers at the leaves of the hierarchy.
For realizable order types, we choose \(t = \delta \log n / \log \log n\) for
small enough \(\delta\) and the second and last term in
Lemma~\ref{lem:space-2-all} dominate with \(n^2 {(\log \log n)}^2 / \log n\).
This proves the space constraints in
Theorems~\ref{thm:abstract}~and~\ref{thm:realizable}.

\paragraph*{Correctness and Query Complexity} Given our encoding and three
pseudoline indices \(a,b,c\) we answer a query as follows: We start by decoding the
parameters \(n\), \(r\), and \(t\). In the word-RAM model, this can be done in
\(O(\log^* n + \log^* r + \log^* t)\) time.\footnote{%
Encode \(n\) in binary using \(\lceil \log n \rceil\) bits, \(\lceil \log n \rceil\) using \(\lceil
\log \log n \rceil\) bits, \(\lceil \log \log n \rceil\) using \(\lceil \log \log \log
n \rceil\) bits, etc.\ until the number to encode is smaller than a constant
which we encode in unary with \texttt{1}'s. Prepend a \texttt{1} to the
largest number and \texttt{0} to all the others except the smallest.
Concatenate those numbers from smallest to largest. Total space is \(O(\log
n)\) bits and decoding \(n\) can be done in \(O(\log^* n)\) time in the
word-RAM model with \(w \geq \log n\). Logarithmic space and constant
decoding time is more easily attained when \(w = \Theta(\log n)\).%
}
Let \(\mathcal{C} = \mathcal{C}_{0,1}\).
First, find the subcell \(\mathcal{C}'\) of \(\mathcal{C}\) containing \(\ell_a
\cap \ell_b\) by testing for each subcell whether the intersections of
\(\ell_a\) and \(\ell_b\) with the subcell alternate in the cyclic permutation.
This can be done in \(O(r^2)\) time by scanning \(\sigma(\mathcal{C},
\ell_a)\) and \(\sigma(\mathcal{C}, \ell_b)\) in parallel. Note that non-full
dimensional subcells can be tested more easily. Next, if \(\ell_c\) does
not properly intersect \(\mathcal{C}'\), answer the query accordingly. If on
the other hand \(\ell_c\) does properly intersect the subcell we recurse on
\(\mathcal{C}'\). This can be tested by scanning \(\sigma(\mathcal{C},
\ell_c)\) in \(O(r^2)\) time. Note that in case that the subcell is
non-full-dimensional we can already answer the query at this point. When we
reach the relative interior of a subcell of the last level of the hierarchy
without having found a satisfactory answer, we can answer the query by table
lookup in constant time. This works as long as each order type identifier for
at most \(t\) pseudolines fits in a constant number of words, which is the case
for the values of \(t\) we defined.
The position of the signatures scanned during the first recursive step of the
query can be computed in constant time and at each other recursive step of
the query we can compute the positions of the signatures we need to scan
from the position of the signatures scanned during the previous recursive step
in constant time.
When we reach the bottom of the recursion, the position of the lookup table
pointer, the position of the canonical labeling, and the position of the lookup
table can be computed in constant time.
The total query time is thus proportional
to \(r^2 \log_r n\) in the worst case, which is logarithmic since \(r\) is
constant.
This proves the query time constraints in
Theorems~\ref{thm:abstract}~and~\ref{thm:realizable}.
With the hope of getting faster queries we could pick \(r = \Theta(\log
t)\) to reduce the depth of the hierarchy, without changing the
space requirements by more than a constant factor.
However, if no additional care is taken, this would slow the queries down by a
\(\Theta(\log^2 t / \log \log t)\) factor because of the scanning approach
taken when locating the intersection \(\ell_a \cap \ell_b\). We show how to handle
small but superconstant \(r\) properly in the next section.

\paragraph*{Preprocessing Time}
For a set of \(n\) points in the plane, or an arrangement of \(n\) lines in the
dual, we can construct the encoding of their order type in quadratic time in
the real-RAM and constant-degree algebraic computation tree models. We prove
Theorem~\ref{thm:preprocessing}.
\begin{proof}
A hierarchical cutting can be computed in \(O(nr^\ell)\) time in the dual
plane. All signatures \(\sigma(\mathcal{C}, \ell)\) can be computed from the cutting
in the same time. The lookup tables and leaf-table pointers can be computed
in \(O(n^2 + t^3 \nu(t))\) time as follows: For each subcell \(\mathcal{C}\)
of the \(\frac{n^2}{t^2}\) subcells of the last level of the hierarchy,
compute a canonical labeling and representation of the lines intersecting
\(\mathcal{C}\) in \(O(n^2)\) time as in~\cite{AILOW14}.
Insert the canonical representation in some trie in \(O(t^2)\)
time. If the canonical representation was not already in the trie, create a lookup
table with the answers to all \(O(t^3)\) queries on those lines and attach a
pointer to that table in the trie. This happens at most \(\nu(t)\) times.
In the encoding, store the canonical labeling and this new pointer or the
pointer that was already in the trie for the subcell \(\mathcal{C}\). All
parts of the encoding can be concatenated together in time proportional to
the size of the encoding.
\end{proof}
\section{Sublogarithmic Query Complexity}\label{sec:query}

We further refine the data structure defined in the previous section so as to
reduce the query time by a \( \log{\log{n}} \) factor. We do so using
specificities of the word-RAM model that allow us to preprocess computations
on inputs of small but superconstant size. The idea is to make each
signature \(\sigma(\mathcal{C}, \ell)\) fit in a single word of memory by only
approximately encoding the cyclic permutation of the intersections around each
subcell, relying on the fact that ambiguous situations rarely arise. Those
ambiguous situations, if they happen, can be deterministically handled using
additional lookup tables.
This improvement is applicable for both abstract and realizable order types.

We improve our main theorems for the two-dimensional case:
\begin{theorem}\label{thm:abstract-loglog}
All abstract order types have a
\((O(n^2), O(\frac{\log{n}}{\log{\log{n}}}))\)-encoding.
\end{theorem}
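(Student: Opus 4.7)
The plan is to combine the hierarchical cutting from Section~\ref{sec:abstract} with two word-RAM ideas: (i) take the branching parameter $r$ to be slightly superconstant so that the hierarchy has depth $O(\log n / \log\log n)$ instead of $O(\log n)$, and (ii) redesign the signatures $\sigma(\mathcal{C},\ell)$ to fit in a single machine word, so that each of the $O(\log_r n)$ recursion steps costs $O(1)$ rather than $O(r^2)$. Concretely, I would pick $r = \Theta(\sqrt{\log n/\log\log n})$ so that $r^2 = O(\log n/\log\log n)$ bits comfortably fits in a word of size $w \geq \log n$, and the hierarchy depth $\log n/\log r = \Theta(\log n/\log\log n)$ matches the target query bound. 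Then I would reuse the same $t = \Theta(\sqrt{\log n})$ as in the abstract case so that Lemma~\ref{lem:space-2-all} still gives total space $O(n^2)$ (the $r$ term only affects the $n^2(\log t + r)/t$ summand and $r \ll \sqrt{\log n}$ keeps this $O(n^2)$).

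Next I would shrink each signature to $O(\log n)$ bits by replacing the exact indices into each subcell's cyclic permutation (which cost $\log(n/r^{i+1})$ bits apiece in Section~\ref{sec:abstract}) with approximate ``bucket'' indices: partition the $O(n/r^{i+1})$ boundary intersections of each subcell into $O(r^{O(1)})$ equal-size consecutive buckets along the cyclic order and store only the bucket label, using $O(\log r) = O(\log\log n)$ bits per crossing. Combined with the $2$ bits per subcell for the location code and the Zone Theorem bound of $O(r)$ intersected subcells per pseudoline per cell, the whole signature packs into $O(r^2 + r\log r) = O(\log n/\log\log n) = O(w)$ bits. With the two packed signatures $\sigma(\mathcal{C},\ell_a)$ and $\sigma(\mathcal{C},\ell_b)$ sitting in one word each, a single global lookup table of size $2^{O(w)} = n^{O(1)}$, built in preprocessing, takes the signature pair (together with $\sigma(\mathcal{C},\ell_c)$) and either returns the next subcell and the correct relation to $\ell_c$, or flags the query as ambiguous—namely the case when $\ell_a$ and $\ell_b$ share a bucket along the subcell boundary that actually contains $\ell_a\cap\ell_b$.

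Ambiguities are handled by a small local tiebreaker attached to each cell $\mathcal{C}$: for every subcell $\mathcal{C}'$ of $\mathcal{C}$ and every bucket of its cyclic permutation, store within the signature word a tiny secondary pointer, and at $\mathcal{C}$ store a table that, on input the pair of buckets, returns the exact subcell containing $\ell_a \cap \ell_b$ (or the subcell-exit information needed to continue the recursion). Since each bucket contains $O((n/r^{i+1})/r^{O(1)})$ pseudolines and the number of buckets per subcell is polynomial in $r$, these local tables are dominated by the $n^2(\log t+r)/t$ term already in Lemma~\ref{lem:space-2-all} and do not inflate the total asymptotically beyond $O(n^2)$. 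At the leaves the scheme is unchanged: the parameter $t$ is still $O(\sqrt{\log n})$, a leaf lookup is $O(1)$, and the total query cost is the depth of the hierarchy, $O(\log n/\log\log n)$.

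The main obstacle is step (iii): controlling the combined size of the ambiguity-resolution tables while keeping tiebreak lookups $O(1)$-time. Verifying that the bucket granularity can be tuned so that (a) each local table is small enough in bits to account against the already-paid-for $O(n^2)$ signature budget, and (b) the pointers inside the signature word are still short enough to keep $\sigma(\mathcal{C},\ell)$ below $w$ bits simultaneously at every level $i$ of the hierarchy, is the delicate bookkeeping that the proof has to carry out; everything else is a direct plug-in to the framework of Section~\ref{sec:abstract}.
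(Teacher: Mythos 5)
Your plan follows the paper's strategy closely: take $r$ superconstant to shrink the hierarchy depth to $O(\log n/\log\log n)$, coarsen the permutation indices into buckets so that each signature $\sigma(\mathcal{C},\ell)$ fits in a single word, use a precomputed word-to-word intersection oracle (the paper's Observation~2) to replace the $O(r^2)$-time scan by an $O(1)$-time table lookup, and handle the ambiguous inputs separately. Your choice $r = \Theta(\sqrt{\log n/\log\log n})$ and $O(r^{O(1)})$ buckets is a valid alternative to the paper's $r = \Theta(\log^\delta n)$ with $\log^\alpha n$ buckets; both keep the signature at $o(\log n)$ bits, hence the oracle table at $n^{o(1)}$ bits, and both keep the hierarchy term in Lemma~\ref{lem:space-2-all} at $O(n^2)$ once $t = \Theta(\sqrt{\delta\log n})$.

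The one genuine gap is precisely the step you flag as the ``main obstacle,'' and your proposed mechanism for it does not work. Storing a ``tiny secondary pointer'' inside $\sigma(\mathcal{C},\ell)$ that disambiguates $\ell$ within its bucket would require $\log(n_{i+1}/r^{O(1)}) = \log n_{i+1} - O(\log r)$ bits, which for small $i$ is essentially the full $\log n$ bits you were trying to save; with $O(r)$ crossings per pseudoline this pushes the signature far past one word, collapsing the whole packing argument. Likewise, a cell table indexed only by the pair of bucket labels cannot return a correct answer, because many distinct pseudoline pairs land in the same pair of buckets. The paper sidesteps both problems with a counting argument you do not state: an ambiguity at level $i$ requires both pseudolines of the pair to share a block of the containing subcell, so per subcell there are at most $\log^\alpha n \cdot (n_{i+1}/\log^\alpha n)^2 = n_{i+1}^2/\log^\alpha n$ ambiguous pairs, and summing over the $r^{2(i+1)}$ subcells and the $\ell = O(\log n/\log\log n)$ levels gives only $O(n^2\log n/(\log^\alpha n\,\log\log n))$ ambiguous events in total, which is $o(n^2)$ for $\alpha > 1$. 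These few events are stored in a global disambiguation table keyed directly on the query pair $(a,b)$ (and the current level), which the query algorithm already has in hand, so nothing extra needs to live inside the signature word. Replacing your signature-embedded pointers with this counting argument and a flat, externally indexed ambiguity table is what makes the space bound close.
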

\begin{theorem}\label{thm:realizable-loglog}
All realizable order types have a
\((O(\frac{n^2\log^\varepsilon n}{\log n}),
O(\frac{\log{n}}{\log{\log{n}}}))\)-encoding.
\end{theorem}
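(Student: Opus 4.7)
The plan is to reuse the hierarchical-cutting encoding of Section~\ref{sec:abstract} but (i) push the branching parameter $r$ from a constant to a slowly growing function of $n$ so that the hierarchy has only $O(\log n/\log\log n)$ levels, and (ii) drop the per-level cost from $O(r^2)$ (the linear scan over two signatures) to $O(1)$ by packing each signature into a single machine word and using word-level parallelism. With a per-level cost of $O(1)$, the query time automatically becomes $O(\log n/\log\log n)$.

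Concretely, I would pick $r=\lceil\log^{\varepsilon/2}n\rceil$ and, for the realizable case, $t=\delta\log n/\log\log n$, exactly as in the proof of Theorem~\ref{thm:realizable}. With these parameters each leaf canonical labeling uses $O(t\log t)=O(\log n)$ bits (one word), each leaf pointer uses $O(\log\nu(t))=O(\log n)$ bits, and the shared collection of lookup tables occupies $O(t^{3}\nu(t))=n^{O(1)}$ space, which is subdominant. Plugging into Lemma~\ref{lem:space-2-all}, the hierarchy contributes $O((n^{2}/t)\,(\log t+r))=O(n^{2}\log^{\varepsilon/2}n\cdot\log\log n/\log n)=O(n^{2}\log^{\varepsilon}n/\log n)$, since $\log\log n=o(\log^{\varepsilon/2}n)$, and the leaf part contributes $O(n^{2}(\log\log n)^{2}/\log n)$, both within the claimed bound.

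The heart of the proof is the constant-time intersection-location step. The exact signature $\sigma(\mathcal{C},\ell)$ has $\zeta_{i}=O(r^{2}+r\log(n/r^{i+1}))$ bits, too many to fit in a word. I would replace the exact cyclic index of each boundary crossing by its bucketed approximation using only $O(\log r)$ bits, yielding an approximate signature of $O(r\log r)=O(\log^{\varepsilon/2}n\cdot\log\log n)=o(\log n)$ bits per pseudoline-cell pair, hence a single word. Given two such one-word signatures, the alternation test required by the Observation of Section~\ref{sec:abstract} can be performed by precomputed bitmask AND/OR/XOR and popcount operations: one decides, in parallel across all $O(r^{2})$ subcells of $\mathcal{C}$, which subcell is crossed by both pseudolines in alternating order. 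This is the lookup-table-in-a-word trick enabled by the word-RAM with $w\ge\log n$, and the universal operations table it uses has size $n^{O(1)}$, again subdominant.

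The main obstacle is the treatment of bucket collisions, where two intersections on $\partial\mathcal{C}$ fall into the same approximate slot and the one-word comparison is ambiguous. I would detect collisions using a bitmask stored alongside the approximate signature, and for each cell maintain an auxiliary table indexed only by the residual ordering among colliding crossings. The bucket width is chosen so the number of colliding pairs per cell is small, so these auxiliary tables add up to $o(n^{2}\log^{\varepsilon}n/\log n)$ bits total. Verifying that this auxiliary cost is genuinely subdominant, and that the resulting intersection-location step is both correct and truly $O(1)$ in the word-RAM, is the delicate part; once this is in place, chaining $O(\log n/\log\log n)$ levels and one constant-time leaf lookup yields the claimed query bound, proving Theorem~\ref{thm:realizable-loglog}.
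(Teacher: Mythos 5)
Your proposal matches the paper's approach: raise $r$ to $\Theta(\log^{\Theta(\varepsilon)} n)$ so the hierarchy has $O(\log n/\log\log n)$ levels, coarsen the cyclic-permutation indices into blocks so that each signature $\sigma(\mathcal{C},\ell)$ fits in a single word, precompute a word-indexed intersection oracle for $O(1)$-time location, and fall back on an auxiliary disambiguation table for the (rare) colliding pairs. The only cosmetic difference is that you describe the oracle in terms of explicit bitmask operations and store a collision bitmask with residual orderings, whereas the paper phrases both as precomputed tables over word-sized inputs; the space and time analyses you give are the same as the paper's.
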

\begin{theorem}\label{thm:preprocessing-loglog}
In the real-RAM model and the constant-degree algebraic decision tree model,
given \(n\) real-coordinate input points in \(\mathbb{R}^2\) we can compute
the encoding of their order type as in
Theorems~\ref{thm:abstract-loglog}~and~\ref{thm:realizable-loglog} in
\(O(n^2)\) time.
\end{theorem}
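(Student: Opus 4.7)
The plan is to mimic the proof of Theorem~\ref{thm:preprocessing} and treat only the ingredients that are genuinely new in Section~\ref{sec:query}: the packing of each signature $\sigma(\mathcal{C},\ell)$ into a single word through an approximate encoding of the cyclic permutation, and the auxiliary lookup tables used at query time to resolve the ambiguities this approximation introduces. Everything else -- the hierarchical cutting itself, the leaf lookup tables, the canonical labelings, and the pointer arithmetic -- carries over verbatim from Section~\ref{sec:abstract}.

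First I would compute a hierarchical $1/r$-cutting of the dual arrangement in $O(nr^\ell)$ time using the construction of~\cite{C93}, with $r$ and $t$ set as prescribed in Section~\ref{sec:query}. Walking each cell gives the exact cyclic-permutation positions of every intersecting pseudoline along the cell boundary; from these the packed single-word signature $\sigma(\mathcal{C},\ell)$ is derived by a constant number of bit-level operations per cell-pseudoline pair. This phase therefore costs $O(nr^\ell + (n^2/t^2)r^2) = O(n^2)$ for the parameter regime of Section~\ref{sec:query}.

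Next I would precompute the auxiliary lookup tables that resolve the ambiguous intersection-location queries. Each such table is indexed by at most a constant number of packed signatures; since each signature fits in a single word whose length depends only on $r$, the total number of distinct queries is a function of $r$ alone and stays comfortably sublinear in $n^2$. Every entry can be filled in constant time by brute-force simulation on the underlying constant-complexity configuration, so this preprocessing phase is absorbed by the $O(n^2)$ budget.

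Finally, the leaf lookup tables and leaf-table pointers are built exactly as in the proof of Theorem~\ref{thm:preprocessing}: for each of the $O(n^2/t^2)$ last-level subcells, compute a canonical labeling of the $\le t$ crossing pseudolines in $O(n^2)$ total time via~\cite{AILOW14}, insert it into a trie, and instantiate a fresh $O(t^3)$-size table only when the labeling has not been seen before. With the values of $t$ used in Theorems~\ref{thm:abstract-loglog} and~\ref{thm:realizable-loglog}, the cumulative cost $O(n^2 + t^3\nu(t))$ is $O(n^2)$, and the final concatenation is proportional to the encoding size. The main obstacle is verifying that the variety of auxiliary tables in the second step truly depends only on $r$ and not on $n$; this is ensured by the design of the approximate encoding, which localizes every ambiguity to a single cell of the cutting and therefore to a word-sized piece of data.
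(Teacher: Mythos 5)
Your proposal follows the paper's proof in structure and reaches the same conclusion, but there is a conflation in your second step worth flagging. The paper uses two distinct auxiliary structures: (i) an \emph{intersection oracle}, a lookup table indexed by pairs of packed signatures (each fitting in $\Theta(\log^{2\delta}n)$ bits) that returns either the containing subcell or a special ``ambiguous'' flag, and (ii) separate \emph{disambiguation tables} that resolve the cases flagged as ambiguous. The sublinearity of the intersection oracle does not come from its size ``depending only on $r$'' as you assert -- indeed $r = \Theta(\log^\delta n)$ in this section, so the signature bit-length and hence the table domain grow with $n$. Sublinearity instead comes from the Observation in the paper that a word-to-word function on inputs of $s \le (1-\varepsilon)\log_2 n$ bits can be tabulated in $2^s T(s)$ time and $2^{s+1}w$ space, both sublinear for the chosen $\delta$. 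The disambiguation tables are sublinear for a different reason: a counting argument in the Disambiguation paragraph shows that ambiguous pairs number $O\bigl(\frac{n^2}{r^{2(i+1)}}/\log^\alpha n\bigr)$ per level-$i$ subcell, summing to subquadratic overall, so one can afford to store their answers explicitly. Your ``main obstacle'' paragraph thus identifies a non-issue (dependence on $r$ versus $n$) while glossing over the actual two ingredients that make the auxiliary-table preprocessing subquadratic. Apart from this imprecision, the hierarchical-cutting construction, signature extraction, leaf-table and pointer construction via tries and canonical labelings, and the final $O(n^2)$ budget match the paper's argument.
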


\paragraph*{Bit Packing}
Fix a large \(\alpha\), a small \(\delta\) and define \(r = \Theta(\log^\delta
n)\). Note that we can construct a hierarchical cutting with superconstant
\(r\) by constructing a hierarchical cutting with some appropriate constant
parameter \(r'\), and then skip levels that we do not need. Denote by \(n_i =
n/r^i\) an upper bound on the number of lines intersecting a cell of level
\(i\). For each subcell of level \(i\), partition its cyclic permutation into
\(\log^\alpha{n}\) blocks of at most \(n_{i+1} / \log^\alpha n \leq n /
{(\log n)}^{\alpha + \delta (i + 1)}\) intersections.
For each pseudoline intersecting a cell we only store the block numbers that
that pseudoline touches in its signature. Note that now each signature
\(\sigma(\mathcal{C}, \ell)\) only uses \(\Theta (\log^{2\delta}{n} + \alpha
\log^\delta{n} \log{\log{n}}) = \Theta(\log^{2\delta}{n})\) bits, which fits in
a word for small enough \(\delta\).

\paragraph*{Intersection Oracle}
We construct an additional lookup table to compute the subcell in which
\(q_i \cap q_j\) lies in constant time. Computing it via scanning with so many
subcells to check would waste any further savings. For that we need
a general observation on the precomputation of functions on small universes.
\begin{observation}
In the word-RAM model, for any word-to-word function $f:[2^w] \to [2^w]$,
we can build a lookup table of total bitsize $2^{s+1} w$
for all $2^s$ inputs $x \in [2^s]$ of bitsize $s \le w$ in time $2^s T(s)$
where $T(s)$ is the complexity of computing $f(x)$, $x \in [2^s]$.
The image of any input of bitsize $s$ can then be retrieved in $O(1)$ time by a
single lookup (since the input fits in a single word). In particular,
we have $2^s T(s)$ and
\(2^{s+1} w\) sublinear as long as \(T(s)=s^{O(1)}\) and $s \le (1-\varepsilon) \log_2 n$.
\end{observation}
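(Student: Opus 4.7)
The plan is to give a direct constructive proof by enumerating all possible inputs of bitsize $s$ and tabulating the function on each of them. Since $s \le w$ by hypothesis, every such input $x \in [2^s]$ fits in a single machine word, which makes both the enumeration and the lookup trivial in the word-RAM model.

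First, I would allocate a contiguous array $A[0\ldots 2^s-1]$ of $2^s$ cells of $w$ bits each, total bitsize $2^s w$; the constant $2^{s+1}w$ in the statement just absorbs the usual factor-of-two slack for padding and word alignment. Then I would iterate $x$ from $0$ to $2^s-1$ (incrementing a single word-sized counter in $O(1)$ time per step), compute $y = f(x)$ using the assumed algorithm of complexity $T(s)$, and write $y$ into $A[x]$. The construction time is therefore $2^s \cdot (T(s) + O(1)) = O(2^s T(s))$ as claimed. To answer a query on an input $x$ of bitsize at most $s$, one simply returns $A[x]$, which is a single random-access read: here we crucially use that $x$ fits in one word so it can be used directly as an offset into $A$ in $O(1)$ time.

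For the sublinearity remark, I would substitute $s \le (1-\varepsilon)\log_2 n$ into both bounds. The space becomes $2^{s+1} w = O(n^{1-\varepsilon} w)$, which is $o(n)$ as long as $w = n^{o(1)}$ (and sublinear in the encoding size, which is the regime of interest here since $w = \Theta(\log n)$). The preprocessing time becomes $2^s T(s) = O(n^{1-\varepsilon} \cdot s^{O(1)}) = O(n^{1-\varepsilon} \cdot (\log n)^{O(1)}) = o(n)$ whenever $T(s) = s^{O(1)}$.

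There is no real obstacle: the observation is essentially a packaging of the standard Method of Four Russians / universal lookup-table trick, and every nontrivial condition in the statement ($s \le w$ so $x$ fits in a word, $2^s$ entries so the table is indexable by $x$ directly, $T(s)=s^{O(1)}$ and $s \le (1-\varepsilon)\log_2 n$ so both bounds are sublinear) is tuned precisely to make the three required properties (space, construction time, $O(1)$ query) hold simultaneously. The only thing to be slightly careful about is that the word-RAM model must allow a random memory access indexed by an arbitrary word in $O(1)$ time, which is part of the standard model, so no additional argument is needed.
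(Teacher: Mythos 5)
Your proposal is correct and matches the paper's intent exactly: the paper states this observation without an explicit proof precisely because the standard tabulation argument you give (allocate a $2^s$-entry array of $w$-bit words, fill it by evaluating $f$ on each small input, answer queries by a single direct-indexed read) is the whole content, and the sublinearity is the routine substitution $s \le (1-\varepsilon)\log_2 n$ with $w$ polylogarithmic in $n$. Nothing to add.
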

In other words, any polynomial time computable word-to-word function can be
precomputed in sublinear time and space for all inputs of roughly logarithmic
size.

Since our pseudoline identifiers now fit in \(\Theta (\log^{2\delta}{n})\) bits
we can choose an appropriate \(\delta\) so as to satisfy the requirements given
above. This means we can precompute the function that sends two pseudoline
identifiers to either the subcell containing their intersection or to some
special value in case of an ambiguous input.

\paragraph*{Disambiguation}
Note that ambiguous inputs rarely occur. An input is ambiguous if and only if
at least one boundary intersection of each pseudoline appears in the same cyclic
permutation block of the cell that contains their intersection. This happens
less than
\(\log^\alpha n \cdot {({n_{i+1}} / \log^\alpha n)}^2 =
\log^\alpha n \cdot \frac{n^2}{r^{2(i+1)}} / \log^{2\alpha} n =
\frac{n^2}{r^{2(i+1)}} / \log^\alpha n\) times per subcell of level \(i\) of
the hierarchy. Summing over all levels, we get a subquadratic size lookup table
for ambiguous cases which can be implemented using standard tools.

\paragraph*{Space Complexity}
The space used by the hierarchy is proportional to
\begin{displaymath}
\sum_{i=0}^{\ell-1} r^{2i} \cdot r^2 \cdot \left( \frac{n}{r^i} +
\frac{n}{r^{i+1}} \alpha \log \log n \right) =
O\left(\frac{n^2}{t} ( \log^\delta n + \alpha \log \log n )\right).
\end{displaymath}
Intersection oracles and disambiguation tables fit in subquadratic space and the
space analysis for the rest of the data structure still holds.
For small enough \(\delta\) the space remains quadratic for abstract order types
and subquadratic for realizable order types.
This proves the space constraints in
Theorems~\ref{thm:abstract-loglog}~and~\ref{thm:realizable-loglog}.
Unfortunately, we must incur a nonabsorbable extra \(\log^\delta n\) factor
in the realizable case. Note that a \(\log{\log{\log{n}}}\) factor can be
squeezed without increasing the space usage by choosing \(r = \Theta(\log \log
n)\)
instead.

\paragraph*{Correctness and Query Complexity} The previous analysis still holds
modulo additional disambiguation lookups and oracle-based intersection
location. We now have a shallower decision tree of depth \(\log_r{n} =
O_\delta(\frac{\log{n}}{\log{\log{n}}})\). This proves the query time
constraints in
Theorems~\ref{thm:abstract-loglog}~and~\ref{thm:realizable-loglog}.

\paragraph*{Preprocessing Time}
We prove Theorem~\ref{thm:preprocessing-loglog}.
\begin{proof}
As before, the hierarchical cutting and all signatures \(\sigma(C, \ell)\)
can be computed in \(O(nr^\ell)\) time. The lookup table and leaf-table
pointers can be computed in \(O(n^2)\) time. All intersection oracles and
disambiguation tables can be computed in subquadratic time.
\end{proof}
\section{Higher-Dimensional Encodings}\label{sec:hyperplanes}

We can generalize our point configuration encoding to any dimension \(d\). The
chirotope of a point set in \(\mathbb{R}^d\) consists of all orientations of
simplices defined by \(d+1\) points of the set. The orientation of the simplex
with \(d+1\) ordered vertices \(p_i\) with coordinates \((p_{i,1} , p_{i,2} ,
\ldots, p_{i,d} )\) is given by the sign of the determinant
\begin{displaymath}
\begin{vmatrix}
1 & p_{1,1} & p_{1,2} & \hdots & p_{1,d} \\
1 & p_{2,1} & p_{2,2} & \hdots & p_{2,d} \\
\vdots & \vdots & \vdots & \ddots & \vdots \\
1 & p_{d+1,1} & p_{d+1,2} & \hdots & p_{d+1,d}
\end{vmatrix}.
\end{displaymath}

We obtain the following generalized result:
\begin{theorem}\label{thm:realizable-d}
All realizable chirotopes of rank \(k \geq 4\) have a
\((O(\frac{n^{k-1}{(\log{\log{n}})}^2}{\log n}),
O(\frac{\log{n}}{\log{\log{n}}}))\)-encoding.
\end{theorem}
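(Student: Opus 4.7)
The plan is to lift the construction of Section~\ref{sec:query} to dimension $d = k-1$, replacing pseudolines with hyperplanes and the cyclic permutations around $2$-cells with the simplicial boundary structure of the cells produced by a bottom-vertex triangulation. I would build a $d$-dimensional hierarchical $1/r$-cutting with $O(r^{di})$ simplicial cells at level $i$, each partitioned into $O(r^d)$ simplicial subcells, and at most $n/r^{i+1}$ hyperplanes crossing any subcell. For each (cell, hyperplane) pair $(\mathcal{C}, H)$ I would store a signature $\sigma(\mathcal{C}, H)$ that records, for every subcell of $\mathcal{C}$, whether $H$ lies above, lies below, contains, or properly intersects it, together with a combinatorial encoding of $H$'s trace on the subcell boundary (which facets it crosses and, on each crossed facet, the combinatorial position of the crossing). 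The Zone Theorem in dimension $d$ bounds the number of subcells a single hyperplane touches by $O(r^{d-1})$, so $\zeta_i = O(r^d + r^{d-1}\log(n/r^{i+1}))$ bits per signature, and the storage layout and addressing scheme of Sections~\ref{sec:abstract}~and~\ref{sec:query} carry over with the exponent $2$ replaced by $d$ throughout.

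The query procedure descends the hierarchy, locating at each level the subcell that contains the flat $H_{a_1}\cap\cdots\cap H_{a_{k-1}}$, and at the bottom returns the orientation of that (generically zero-dimensional) intersection with respect to $H_{a_k}$ via a leaf lookup table. The main obstacle is the intersection-location step: in the plane this reduced to a simple alternation test on two cyclic permutations, whereas in dimension $d$ the intersection of $d$ hyperplanes inside a simplex must be extracted from purely combinatorial data. I would handle this exactly as in Section~\ref{sec:query}: choose $r = \Theta(\log^{\delta} n)$ so that each signature fits in a single word, and precompute an intersection oracle indexed by $(k-1)$-tuples of packed signatures. Because $k$ is constant these tuples fit in $O(1)$ words, and for $\delta$ small enough the word-RAM precomputation Observation from Section~\ref{sec:query} builds the oracle in sublinear time and space. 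Ambiguous inputs, where two hyperplanes' traces fall in the same permutation block, are rare and handled by a secondary disambiguation table whose total size sums to a subpolynomial fraction of $n^{k-1}$ by the same counting argument as in Section~\ref{sec:query}.

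For the leaf tables I would invoke the Milnor--Thom/Warren bound: the number of realizable rank-$k$ chirotopes on $t$ elements is $\nu(t) = 2^{O(t\log t)}$, since they are determined by the sign pattern of the $O(t^k)$ determinantal polynomials of degree $k$ in the $(k-1)t$ coordinate variables of any realization. Each leaf therefore stores an $O(t\log t)$-bit lookup pointer and an $O(t\log t)$-bit canonical labeling, while the aggregate table space $t^k\nu(t) = 2^{O(t\log t)}$ is polynomial in $n$ with arbitrarily small exponent once $t = O(\log n/\log\log n)$. Generalizing Lemmas~\ref{lem:space-2-hierarchy}~and~\ref{lem:space-2-all} by the same geometric-series manipulation with exponent $d$, and choosing $t = \Theta(\log n/\log\log n)$, gives total space $O(n^{k-1}(\log\log n)^2/\log n)$ bits. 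The hierarchy has depth $O(\log_r n) = O(\log n/\log\log n)$ and, via the intersection oracle, spends $O(1)$ work per level, which delivers the claimed query time.
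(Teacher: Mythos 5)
Your proposal recovers the scaffolding of the paper's argument (a $d$-dimensional hierarchical cutting, per-level subcell location, leaf lookup tables indexed by canonical chirotopes, word-packed signatures with precomputed oracles, Milnor--Thom to bound $\nu(t)$), but it skips the one genuinely new ingredient that Section~\ref{sec:hyperplanes} needs: a usable encoding for the \emph{intersection-location} subproblem in dimension $\geq 3$. In the plane, whether $\ell_a\cap\ell_b$ lies inside a cell is a \emph{local} test on the pair's boundary crossings (the alternation test in the cyclic permutation), so storing $O(\log m)$ bits per line per touched subcell is enough. You propose the analogous per-hyperplane storage---``which facets the trace crosses, and on each crossed facet the combinatorial position of the crossing''---with signature size $\zeta_i = O(r^d + r^{d-1}\log(n/r^{i+1}))$, and then let an oracle decode it. But for $d\geq 3$ there is no analogous alternation criterion: to decide whether $H_{i_1}\cap\cdots\cap H_{i_d}$ lies in a simplex $\mathcal{C}$, one needs the relative position of the $d$ traces on $\partial\mathcal{C}$ (e.g.\ for $d=3$, whether the two intersection points of $\gamma_1,\gamma_2$ on $\partial\mathcal{C}$ are separated by $\gamma_3$), which is information about \emph{tuples} of hyperplanes and cannot be reconstructed from $O(\log m)$ bits recorded independently per hyperplane. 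Precomputing an oracle over the packed signatures cannot conjure information that the signatures do not carry; your scheme is essentially circular, since the intersection-location query is itself an orientation-type query on fewer hyperplanes.

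The paper resolves this by introducing the \emph{intersection function} $\mathcal{I}\colon [n]^d\to\{0,1\}$ for a convex body $C$ and proving it admits an $(O(n^{d-1}\log n), O(d))$-encoding: the identity $(H_{i_1}\cap\cdots\cap H_{i_d})\cap C\neq\emptyset$ iff $(H_{i_1}\cap\cdots\cap H_{i_{d-1}})\cap(H_{i_d}\cap C)\neq\emptyset$ is unrolled down to dimension two, so one stores, for each of the $n^{d-2}$ choices of $H_{i_3},\dots,H_{i_d}$, a planar cyclic-permutation encoding of size $O(n\log n)$. Plugging this per-subcell cost $O\bigl((n/r^{i+1})^{d-1}\log(n/r^{i+1})\bigr)$---not your $O(\tfrac{n}{r^i}\cdot r^{d-1}\log(n/r^{i+1}))$---into the hierarchy sum yields $H_r^\ell(n,d)=O\bigl(\tfrac{n^d}{t}(\log t + \tfrac{r}{t^{d-2}})\bigr)$, and balancing the two terms is what forces the choice $r=\Theta(t^{d-2}\log t)$ in the paper (not a generic $\log^\delta n$). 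Your space estimate therefore both uses the wrong per-subcell cost and omits the main lemma that makes the generalization sound; the rest of the outline (bit-packing, oracles, leaf tables, the $(\log\log n)^2/\log n$ arithmetic once $t=\Theta(\log n/\log\log n)$) would go through once the intersection-function encoding is in place.
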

\begin{theorem}\label{thm:preprocessing-d}
In the real-RAM model and the constant-degree algebraic decision tree model,
given \(n\) real-coordinate input points in \(\mathbb{R}^d\) we can compute
the encoding of their chirotope as in Theorem~\ref{thm:realizable-d} in \(
O(n^{d}) \) time.
\end{theorem}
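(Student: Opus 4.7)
The plan is to follow the proof of Theorem~\ref{thm:preprocessing-loglog} almost verbatim, substituting a $d$-dimensional ingredient for each planar ingredient. In the dual, the $n$ input points become $n$ hyperplanes in $\mathbb{R}^d$, where $d = k-1 \geq 3$. First I would build a hierarchical $(1/r)$-cutting of this arrangement using Chazelle's construction in any fixed dimension, with $r = \Theta(\log^\delta n)$ (achieved by collapsing levels of a constant-$r'$ hierarchy, as in Section~\ref{sec:query}), leaf parameter $t = \Theta(\log n / \log \log n)$, and $\ell$ levels chosen so that $r^\ell = n/t$. Since the total combinatorial complexity of the hierarchy is $O(n^d)$, the cutting itself and all signatures $\sigma(\mathcal{C}, h)$ (now encoding how each hyperplane meets the boundary of each $d$-dimensional cell via its bottom-vertex simplicial decomposition) can be extracted in $O(n^d)$ time.

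Next, at each of the $O((n/t)^d)$ leaves I would compute a canonical labeling and representation of the at most $t$ hyperplanes crossing it using the $d$-dimensional analogue of the algorithm of~\cite{AILOW14}, insert the representation into a trie, and, the first time a representation is seen, precompute a lookup table answering all $\binom{t}{d+1}$ orientation queries on those hyperplanes. Per-leaf work is $O(n^{d-1})$, giving $O(n^d/t^{d-1})$ total for labelings; the lookup tables themselves cost $O(t^{d+1} \nu_d(t))$, which is absorbed into $O(n^d)$ since $\nu_d(t) = 2^{O(t \log t)}$ in the realizable case and $t = \Theta(\log n / \log \log n)$. Finally, the intersection oracles (now mapping $d$-tuples of hyperplane identifiers to the subcell containing their common intersection) and the disambiguation tables for the bit-packed cyclic-permutation signatures are built by the word-RAM precomputation observation of Section~\ref{sec:query}, which fits in time and space subpolynomial in $n^d$ as long as the inputs have bitsize at most $(1-\varepsilon)\log_2 n^d$, which is true for our choice of $r$ and $t$.

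The main obstacle is checking that the higher-dimensional analogues of the planar subroutines do behave as advertised. In particular, one must verify that the hierarchical cutting in $\mathbb{R}^d$ can still be combined with a superconstant branching factor without blowing up either the space analysis of Lemma~\ref{lem:space-2-all} or the construction time beyond $O(n^d)$; that the intersection-location observation generalizes from the ``alternating boundary crossings'' used for pseudolines in the plane to a constant-bit incidence pattern on $(d-1)$-dimensional boundary simplices of a bottom-vertex triangulation; and that the canonical-labeling algorithm of~\cite{AILOW14} can be extended to rank-$k$ chirotopes within $O(n^{d-1})$ work per leaf. Once these routine extensions are in place, the preprocessing bound follows by exactly the aggregation used in the proof of Theorem~\ref{thm:preprocessing-loglog}.
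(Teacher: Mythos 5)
Your proposal takes the same route as the paper's proof: compute the higher-dimensional hierarchical cutting and all signatures in \(O(n(r^\ell)^{d-1}) = O(n^d/t^{d-1})\) time, compute the canonical labelings, leaf pointers, and lookup tables in \(O(n^d)\) total time via the rank-\((d+1)\) analogue of the algorithm of~\cite{AILOW14}, and build the intersection oracles and disambiguation tables of Section~\ref{sec:query} in \(o(n^d)\) time. The only blemish is an arithmetic slip in the leaf accounting: you state ``\(O(n^{d-1})\) per leaf, \(O(n^d/t^{d-1})\) total,'' which are mutually incompatible over the \(O((n/t)^d)\) leaves; the intended per-leaf cost for a canonical labeling of the at most \(t\) hyperplanes crossing a leaf is \(O(t^d)\), giving \(O(n^d)\) total as in the paper.
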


\paragraph*{Intersection Location}
In the primal, the orientation of a simplex can be interpreted as the location
of its last vertex with respect to the (oriented) hyperplane spanned by its
base. In the dual, this orientation corresponds to the location of the
intersection of the \(d\) first dual hyperplanes with respect to the last
(oriented) dual hyperplane.
We gave an encoding for the two-dimensional case in Section~\ref{sec:abstract}.
With this encoding, a query is answered by traversing the levels of
some hierarchical cutting, branching on the location of the intersection of two
of the three query lines.  We generalize this idea to \(d\) dimensions. Now the
cell considered at the next level of the hierarchy will depend on the location
of the intersection of \(d\) of the \(d+1\) query hyperplanes.

In two dimensions, we solved the following subproblem:

\begin{problem}
Given a triangle and \(n\) lines in the plane, build a data structure that,
given two of those lines, allows to decide whether their intersection
intersects the triangle.
\end{problem}

In retrospective, we showed that there exists such a data structure using
\(O(n \log{n})\) bits that allows for queries in \(O(1)\) time. We generalize
this result. Consider the following generalization of the problem in
\(d\) dimensions:

\begin{problem}
Given a convex body and \(n\) hyperplanes in \(\mathbb{R}^d\), build a data
structure that, given \(d\) of those hyperplanes, allows to decide whether their
intersection intersects the convex body.
\end{problem}

Of course this problem can be solved using \(O(n^d)\) space by explicitly
storing the answers to all possible queries. This is best possible for
\(d=1\). For \(d \geq 2\), we show how to reduce the space to \(O(n^{d-1} \log
n)\) by recursing on the dimension, taking \(d = 2\) as the base case.

We encode the function that maps a \(d\)-tuple of indices of input dual
hyperplanes \(H_i\) to \(1\) if their intersection intersects a fixed convex
body \(C\) and \(0\) otherwise.
\begin{displaymath}
\mathcal{I} \colon\, {[n]}^d \to \{\, 0,1\,\} \colon\,
(i_1,i_2,\ldots,i_d) \mapsto
(H_{i_1} \cap H_{i_2} \cap H_{i_3} \cap H_{i_4} \cap \cdots \cap
H_{i_d})
\cap C
\neq \emptyset.
\end{displaymath}
We call this function the \emph{intersection function} of \((C,H)\).
We prove that
\begin{theorem}
All intersection functions have a \((O(n^{d-1} \log n),O(d))\)-encoding.
\end{theorem}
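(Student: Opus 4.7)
The plan is induction on the dimension $d$, with the two-dimensional result recalled from Section~\ref{sec:abstract} serving as the base case: there, a triangle and $n$ lines admit an $(O(n \log n), O(1))$-encoding of the predicate ``do these two lines cross inside the triangle?'', via the cyclic permutation of boundary crossings. We now show how to reduce an intersection function in dimension $d \geq 3$ to $n$ intersection functions in dimension $d-1$.

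The key geometric observation is that for any fixed hyperplane $H$, restricting everything to $H$ turns the problem into a $(d-1)$-dimensional instance. Concretely, for each index $i \in [n]$, let $C_i := C \cap H_i$, which is a convex body inside the $(d-1)$-dimensional affine space $H_i$, and let $H_{j}^{(i)} := H_j \cap H_i$ for every $j \neq i$, which is a hyperplane of $H_i$. Then, for any $d$-tuple $(i_1, i_2, \ldots, i_d)$,
\begin{displaymath}
(H_{i_1} \cap H_{i_2} \cap \cdots \cap H_{i_d}) \cap C \neq \emptyset
\quad\Longleftrightarrow\quad
(H_{i_2}^{(i_1)} \cap H_{i_3}^{(i_1)} \cap \cdots \cap H_{i_d}^{(i_1)}) \cap C_{i_1} \neq \emptyset,
\end{displaymath}
so the $d$-dimensional query on $(C, \{H_1, \ldots, H_n\})$ with first coordinate $i_1$ reduces to a $(d-1)$-dimensional intersection-function query on $(C_{i_1}, \{H_j^{(i_1)}\}_{j \neq i_1})$ with the remaining $d-1$ coordinates.

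Thus I would build the encoding by storing, for each $i \in [n]$, the inductively guaranteed $(d-1)$-dimensional encoding of the intersection function of $(C_i, \{H_j^{(i)}\}_{j \neq i})$, concatenated in order of $i$ with uniform padding so that the $i$th sub-encoding starts at a position computable in $O(1)$ time from $i$. Writing $S_d(n)$ and $Q_d(n)$ for the space and query complexities, the construction yields the recurrences
\begin{displaymath}
S_d(n) \le n\, S_{d-1}(n-1) + O(n), \qquad Q_d(n) \le Q_{d-1}(n-1) + O(1),
\end{displaymath}
with base case $S_2(n) = O(n \log n)$ and $Q_2(n) = O(1)$ from Section~\ref{sec:abstract}. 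Unrolling gives $S_d(n) = O(n^{d-1} \log n)$ and $Q_d(n) = O(d)$, as required. A query $(i_1, \ldots, i_d)$ is answered by jumping to the $i_1$th block and recursively querying $(i_2, \ldots, i_d)$ there, with each level performing only a constant-time address computation.

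The only delicate point I anticipate is ensuring that the recursive sub-instances are genuinely $(d-1)$-dimensional intersection functions in the sense required by the induction hypothesis: one must argue that the $H_j^{(i)}$ behave as hyperplanes of $H_i$ (handling coincidences where $H_j \supseteq H_i$ or $H_j \parallel H_i$ by assigning them a sentinel index that the base-case encoding treats as ``never intersects $C_i$'') and that $C_i$ is a well-defined convex body, possibly empty, in $H_i$. All such degeneracies can be encoded with $O(n)$ additional bits per level, which is absorbed by the $O(n)$ overhead already allowed in the recurrence.
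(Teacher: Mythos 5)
Your proposal is correct and takes essentially the same approach as the paper: both reduce the $d$-dimensional intersection function to a family of lower-dimensional ones by restricting everything to one of the query hyperplanes (you fix $H_{i_1}$, the paper fixes $H_{i_d}$; the choice is immaterial), then unroll to the two-dimensional base case to obtain $n^{d-2}$ instances of the $O(n\log n)$-space, $O(1)$-query encoding, yielding $O(n^{d-1}\log n)$ space and $O(d)$ query time. One small slip in your degeneracy sketch: the sentinel ``never intersects $C_i$'' is correct when $H_j$ is strictly parallel to $H_i$ (so $H_j^{(i)} = \emptyset$), but when $H_j \supseteq H_i$ (including the repeated-index case $j=i_1$) the restriction $H_j^{(i)} = H_i$ imposes \emph{no} constraint and should be treated as ``always satisfied'' rather than ``never intersects''; the paper also omits this case, but it needs a different sentinel than the parallel one.
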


\begin{proof}
Consider a convex body \(C\) and \(n\) hyperplanes \(H_i\). We want a data
structure that can answer any query of the type
\begin{displaymath}
(H_{i_1} \cap H_{i_2} \cap H_{i_3} \cap H_{i_4} \cap \cdots \cap
H_{i_{d-1}} \cap H_{i_d})
\cap C
\neq \emptyset.
\end{displaymath}

Note that this is equivalent to deciding whether
\begin{displaymath}
(H_{i_1} \cap H_{i_2} \cap H_{i_3} \cap H_{i_4} \cap \cdots \cap H_{i_{d-1}})
\cap (H_{i_d} \cap C)
\neq \emptyset,
\end{displaymath}
where \(H_{i_d} \cap C\) is a convex body of dimension \(d-1\), and the
number of hyperplanes we want to intersect it with is \(d-1\).

We unroll the recursion until the convex body is of dimension two, and only
two hyperplanes are left to intersect. We then notice that the decision we
are left with is equivalent to
\begin{displaymath}
(H_{i_1} \cap (H_{i_3} \cap H_{i_4} \cap \cdots \cap H_{i_d}))
\cap
(H_{i_2} \cap (H_{i_3} \cap H_{i_4} \cap \cdots \cap H_{i_d}))
\cap (C \cap (H_{i_3} \cap H_{i_4} \cap \cdots \cap H_{i_d}))
\neq \emptyset,
\end{displaymath}
which reads: ``Given two lines and a convex body in some plane, do they
intersect?''.
We can answer this query if we have the encoding for \(d = 2\) which is
obtained by replacing \emph{triangle} by \emph{convex body} in the
two-dimensional original problem. The total space taken is proportional to
the number of times we unroll the recursion times the space taken in two
dimensions, which is \(n^{d-2} \cdot n \log{n} = O(n^{d-1} \log{n}) \).
Queries can then be answered in \(O(d)\) time.

Note that degenerate cases can arise: convex bodies that are empty because of
nonintersecting hyperplanes and convex bodies that are higher dimensional because
of linearly dependent hyperplanes. However, all those cases can be
dealt with appropriately. We omit the details here.
\end{proof}

In the paragraphs that follow, we show how to plug this result in those of
the previous sections to obtain analogous results for the \(d\)-dimensional
version of the problem.

\paragraph*{Space Complexity}
We first generalize Lemma~\ref{lem:space-2-hierarchy} of
Section~\ref{sec:abstract}. Let $H_r^\ell(n,d) \in \mathbb{N}$ be the maximum
amount of space (bits), over all arrangements of \(n\) hyperplanes in \(\mathbb{R}^d\),
taken by the $\ell \in \mathbb{N}$ levels of a hierarchy with parameter $r \in
(1,+\infty)$. As before, this excludes the space taken by the lookup tables
and associated pointers at the leaves.

\begin{lemma}\label{lem:space-d-hierarchy}
For \( r \geq 2 \) we have
\begin{displaymath}
H_r^\ell(n,d)
=
O\left(\frac{n^d}{t} (\log t + \frac{r}{t^{d-2}})\right).
\end{displaymath}
\end{lemma}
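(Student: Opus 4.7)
I would follow the template of Lemma~\ref{lem:space-2-hierarchy}, replacing the per-subcell cyclic-permutation storage of the two-dimensional proof by the $(O(m^{d-1}\log m),O(d))$-encoding of the intersection function just established. Concretely, at level $i$ there are $O(r^{d(i+1)})$ subcells; each subcell stores $O(1)$ position bits for each of the at most $n/r^i$ hyperplanes of its parent cell (above/below/contains/properly intersects), and additionally an intersection-function encoding of size $O((n/r^{i+1})^{d-1}\log(n/r^{i+1}))$ for the at most $n/r^{i+1}$ hyperplanes entering its interior. Summing gives
\begin{displaymath}
H_r^\ell(n,d) \;=\; O\!\left(\sum_{i=0}^{\ell-1} r^{d(i+1)}\left(\frac{n}{r^i} + \Bigl(\frac{n}{r^{i+1}}\Bigr)^{d-1}\log\frac{n}{r^{i+1}}\right)\right).
\end{displaymath}

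I would then factor out $n^d/t$ using $n = tr^\ell$. A short computation (generalizing the one in the 2D proof) turns the first summand at level $i$ into $\frac{n^d}{t}\cdot\frac{r^d}{t^{d-2}\,r^{(d-1)(\ell-i)}}$ and the second into $\frac{n^d}{t}\cdot\frac{\log(n/r^{i+1})}{r^{\ell-i-1}}$, which collapses to the 2D expression when $d=2$. Reindexing $j\leftarrow\ell-i-1$ and writing $\log(n/r^{i+1}) = \log t + j\log r$, the two geometric-sum inequalities from the planar proof together with $\sum_{j\ge 1} r^{-(d-1)j} = O(r^{-(d-1)})$ for $r\ge 2$ reduce the first part to $O(r^d/(t^{d-2}r^{d-1})) = O(r/t^{d-2})$ and the second to $O(\log t + \log r)$. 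Absorbing $\log r$ into $\log t$ yields the claimed $O(\frac{n^d}{t}(\log t + r/t^{d-2}))$.

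The main obstacle I expect is the per-subcell bookkeeping rather than the analysis itself. The intersection-function encoding is stored monolithically per subcell rather than per hyperplane, so one must check that the constant-bit ``above/below/contains/properly-intersects'' tag attached to each (parent-cell hyperplane, subcell) pair still suffices to drive the query recursion exactly as the cyclic permutation does in the plane, and that lower-dimensional subcells can be accommodated with only a constant number of extra bits per pair, as in Section~\ref{sec:abstract}. Once this is verified, the sharpening of the $r$-term from $r$ in the planar case to $r/t^{d-2}$ in general falls out automatically from the faster decay $r^{-(d-1)}$ of the first geometric series in dimension $d\ge 3$.
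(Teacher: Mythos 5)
Your proposal follows the same route as the paper: the same per-level sum (position bits for the parent cell's hyperplanes plus the $(O(m^{d-1}\log m), O(d))$ intersection-function encoding per subcell), the same substitution $n = t r^\ell$, the same reindexing $j \leftarrow \ell-i-1$, and the same split into three geometric-type sums, all of which the paper then bounds with the two inequalities from the planar proof. Your derivations of the factored forms of the two summands are correct and match the paper's intermediate expression.

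One small imprecision: after applying the inequalities you write the second part as $O(\log t + \log r)$ and say you ``absorb $\log r$ into $\log t$,'' which is not justified by the hypothesis $r \geq 2$ alone (when $r \gg t$ this fails). The tight application of $\sum_i i x^i \le x/(1-x)^2$ with $x = 1/r$ gives a coefficient of $r/(r-1)^2 = \Theta(1/r)$ for the $\log r$ term, so the contribution is $O(\frac{\log r}{r}) = O(1)$, which is then absorbed because $\log t + r/t^{d-2} = \Omega(1)$ whenever $r \geq 2$ and $d$ is constant. This is what the paper relies on (it is exactly the $(1 + \frac{2r-1}{r^2-2r+1})\frac{\log r}{r}$ term in Lemma~\ref{lem:space-2-hierarchy}); carrying the extra $1/r$ factor through closes the gap without any new ideas.
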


\begin{proof}
By definition we have
\begin{displaymath}
H_r^\ell(n,d)
= O \left(
\sum_{i=0}^{\ell-1} \left(
	r^{di} \cdot r^d \cdot \left(
	  \frac{n}{r^i} + {\left(\frac{n}{r^{i+1}}\right)}^{d-1} \cdot \log \frac{n}{r^{i+1}}
\right)
\right)
\right).
\end{displaymath}
Using \(t = n/r^\ell\), reversing the summation with
\(i \leftarrow \ell - i - 1\), and grouping the terms, we have
\begin{displaymath}
H_r^\ell(n,d)
=
O \left(
\frac{n^d}{t} \left(
	\frac{r}{t^{d-2}}
	\sum_{i=0}^{\ell-1} \frac{1}{{(r^{d-1})}^i}
	+
	\log t
	\sum_{i=0}^{\ell-1} \frac{1}{r^i}
	+
	\log r
	\sum_{i=0}^{\ell-1} \frac{i}{r^i}
\right)
\right).
\end{displaymath}
Using the same inequalities as in the proof of Lemma~\ref{lem:space-2-hierarchy}
we conclude that for $r \geq 2$
\begin{displaymath}
H_r^\ell(n,d)
=
O\left(\frac{n^d}{t} \left(\log t + \frac{r}{t^{d-2}}\right)\right). \qedhere
\end{displaymath}
\end{proof}

Taking into account the space taken by the other bits of the encoding
we obtain
\begin{lemma}\label{lem:space-d-all}
The space taken by our data structure is
\begin{displaymath}
	S_r^\ell(n,d) =
	O\left(
	  \log dntr +
	  \frac{n^d}{t} ( \log t + \frac{r}{t^{d-2}} ) + t^{d+1} \nu(t,d+1) +
	  \frac{n^d}{t^d} ( \log \nu(t,d+1) + t \log t )
	\right),
\end{displaymath}
where \(\nu(n,k) = 2^{\Theta({(k-1)}^2 n \log{n})}\) denotes the number of
realizable rank-\(k\) chirotopes of size \(n\).
\end{lemma}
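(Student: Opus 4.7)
The plan is to adapt the accounting that led to Lemma~\ref{lem:space-2-all} term by term to arbitrary dimension $d$, now plugging in Lemma~\ref{lem:space-d-hierarchy} in place of Lemma~\ref{lem:space-2-hierarchy} and using the $d$-dimensional counts for cells, chirotopes on small subsets, and lookup tables. The encoding decomposes into four pieces, each of which contributes exactly one summand.

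First, the parameters $n$, $r$, $t$, and $d$ are stored with standard prefix coding in $O(\log dntr)$ bits, yielding the first summand. Second, the hierarchical cutting itself (all signatures $\sigma(\mathcal{C},H)$ across all levels) is bounded directly by Lemma~\ref{lem:space-d-hierarchy}, contributing $O(\tfrac{n^d}{t}(\log t + r/t^{d-2}))$, which is the second summand.

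Third, I would handle the lookup tables at the leaves. A $d$-dimensional $(1/r)$-cutting has $O(r^d)$ cells per level, so after $\ell$ levels the number of leaves is $O(r^{d\ell})=O(n^d/t^d)$, and each is intersected by at most $t$ hyperplanes. The chirotope of the $\le t$ hyperplanes crossing a given leaf is a realizable rank-$(d+1)$ chirotope on $t$ elements, of which there are at most $\nu(t,d+1)$. Hence only $\nu(t,d+1)$ distinct tables are needed, and each table encodes the constant-valued answer for each of the $O(t^{d+1})$ possible query $(d{+}1)$-tuples, costing $O(t^{d+1})$ bits. Totalling gives the $t^{d+1}\nu(t,d+1)$ term. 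At each leaf I store a pointer of $O(\log \nu(t,d+1))$ bits identifying its table plus a canonical labeling of $O(t\log t)$ bits to order the incident hyperplanes consistently with the table, producing the $\frac{n^d}{t^d}(\log\nu(t,d+1)+t\log t)$ summand after multiplying by the leaf count.

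Summing the four contributions gives the claimed bound. The calculation is essentially mechanical once Lemma~\ref{lem:space-d-hierarchy} is in hand; the only non-routine point is confirming the leaf count $O(n^d/t^d)$ and verifying that $\nu(t,d+1)$ (rather than some cruder count) is indeed the correct number of distinct tables. No genuine obstacle arises, since both facts follow from the standard $O(r^d)$ per-level size of a $d$-dimensional cutting and from the definition of realizable rank-$(d+1)$ chirotopes, respectively.
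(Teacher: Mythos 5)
Your decomposition into parameters, hierarchy signatures (via Lemma~\ref{lem:space-d-hierarchy}), shared lookup tables, and per-leaf pointers plus canonical labelings is exactly how the paper arrives at this bound; the paper gives no explicit proof of this lemma, only the phrase ``taking into account the space taken by the other bits of the encoding,'' and your four terms correctly reconstruct what that phrase hides, including the leaf count $O(r^{d\ell})=O(n^d/t^d)$ and the use of $\nu(t,d+1)$ as the number of distinct rank-$(d+1)$ chirotopes on $t$ elements. This is the paper's approach; there is no gap.
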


We pick \(r\) constant and choose \(t = \delta \log n / \log \log n\) for small
enough \(\delta\). The second term in Lemma~\ref{lem:space-2-all} dominates
with \(n^d {(\log \log n)}^2 / \log n\). This proves the space constraint in
Theorem~\ref{thm:realizable-d}.

\paragraph*{Correctness and Query Complexity}
As before, a query is answered by traversing the hierarchy, which takes
\(O(\log n)\) time. The query time can be further improved using the method from
Section~\ref{sec:query} with \(r = \Theta(t^{d-2} \log t)\). This proves the
query time constraint in Theorem~\ref{thm:realizable-d}.

\paragraph*{Preprocessing Time}
We prove Theorem~\ref{thm:preprocessing-d}.
\begin{proof}
The hierarchical cuttings can be computed in \(O(n{(r^\ell)}^{d-1})\) time.
The lookup table and leaf-table pointers can be computed in \(O(n^d)\) time
using the canonical labeling and representation for rank-\((d+1)\) chirotopes
given in~\cite{AILOW14}. All intersection oracles and disambiguation tables
can be computed in \(o(n^{d})\) time.
\end{proof}
\section{Conclusion}

Observe the following. Assume we are given an instance of some real-input
decision problem. Given a decision tree of depth \(D\) for this problem for which
each input query and answer can be encoded using at most \(Q\) bits, we can
encode the instance using at most \(DQ\) bits by encoding the path traversed when
executing the decision tree on this instance. The general position testing
problem (GPT) asks if an input set of \(n\) points in the plane contains a
collinear triple. It is an example of a real-input decision problem for which
no subquadratic real-RAM algorithm is known even though the best known lower
bound is only linearithmic. Since shallow decision trees yield short encodings,
we see the design of a subquadratic encoding for realizable order types as a
stepping stone towards nonuniform and uniform subquadratic algorithms for
GPT\@.

Unfortunately, even though our encodings achieve subquadratic space for
realizable order types, they cannot be used to test for isomorphism in
subquadratic time. This is partly because the preprocessing time to construct
the encoding is already quadratic.
However, observe that the preprocessing time we achieve in this contribution
matches the best known upper bound for GPT in the algebraic decision tree
model:
\begin{theorem}
If there is an encoding with construction cost \(C(N)\) for
realizable order types in the algebraic decision tree model, then
there is a nonuniform algorithm for general position testing that runs
in time \(C(N)\) in the algebraic decision tree model.
\end{theorem}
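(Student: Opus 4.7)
The plan is to reduce General Position Testing to computing any encoding of the realizable order type, by observing that the encoding must already distinguish collinear from non-collinear configurations, so a decision tree that builds the encoding can be relabeled to a decision tree that decides GPT of the same depth.

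First I would set up the model precisely: an algebraic decision tree of depth $C(N)$ that builds the encoding is a rooted tree whose internal nodes test the sign of a constant-degree polynomial in the $N=2n$ input coordinates, each with three children (for $-$, $0$, $+$), and whose leaves output a bit string. Two inputs that follow the same root-to-leaf path produce the same output, so all inputs reaching a given leaf $v$ share a common encoding $e_v$. By Definition~\ref{def:encoding} and the fact that the encoding represents the order type, the function $\chi$ can be recovered from $e_v$. In particular, for every triple $(a,b,c)$ the value $\chi(a,b,c) \in \{-,0,+\}$ is determined by $v$, and so is the Boolean answer ``does there exist a triple $(a,b,c)$ with $\chi(a,b,c)=0$?'', i.e.\ the GPT answer.

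Next I would construct the nonuniform GPT decision tree $T'$ from the encoding decision tree $T$. Keep the internal structure of $T$ unchanged, and for each leaf $v$, query the encoded order type $e_v$ to determine whether a collinear triple exists; relabel $v$ with this single bit (YES or NO). This relabeling is done once, offline, during the nonuniform construction of $T'$ for each input size $n$, so the running cost does not enter the query complexity. The resulting tree $T'$ is an algebraic decision tree of the same depth $C(N)$, and on any input point set it outputs the correct GPT answer by the argument of the previous paragraph.

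Finally I would note two minor points. The encoding is a priori defined for order types ``up to isomorphism'', but collinearity is an isomorphism-invariant property, so the labeling at each leaf is well-defined; if one prefers, one can invoke the footnote after Definition~\ref{def:encoding} and use a canonical labeling of the realizing arrangement. Also, by construction $T'$ depends on $n$, which is exactly why the resulting algorithm is nonuniform. There is no real obstacle here beyond stating the model carefully; the only ``hard'' part is pointing out that the depth of $T$ bounds not only the time to produce the encoding but automatically the time to answer any Boolean question determined by the encoding, and GPT is such a question.
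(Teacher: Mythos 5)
Your proof is correct and takes essentially the same approach as the paper: the paper's terse version ("construct the encoding, then at zero cost in the nonuniform model run all $O(n^3)$ queries on the encoding") is exactly your observation that the leaf of the encoding-constructing decision tree determines the encoding, hence determines the GPT answer, so the leaves can be relabeled at no cost to the depth. You have merely spelled out the decision-tree mechanics that the paper leaves implicit.
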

\begin{proof}
Construct the encoding. Then at zero cost in the nonuniform
model, run all \(O(n^3)\) queries on the encoding.
\end{proof}
Goodman and Pollack saw GPT as a multidimensional generalization of
sorting~\cite{GP83}.
We again stress the need for a better understanding of this fundamental
problem.
\bibliography{paper}

\end{document}